\pgfplotsset{compat=1.13}
\theoremstyle{definition}
\theoremstyle{plain}
\newtheorem{proposition}{Proposition}
\theoremstyle{remark}
\theoremstyle{plain}
\newif\ifextendedversion
\newcolumntype{P}[1]{D{.}{.}{#1}} 
\newcolumntype{V}[1]{>{$\vcenter\bgroup\hbox\bgroup}#1<{\egroup\egroup$}} 
\begin{document}
\title[Reach Set Approximation through Decomposition]{Reach Set Approximation through Decomposition with Low-dimensional Sets and High-dimensional Matrices}

\author{Sergiy Bogomolov}
\affiliation{%
	\institution{Australian National University}
	\city{Canberra}
	\state{Australia}
}

\author{Marcelo Forets}
\author{Goran Frehse}
\author{Fr\'ed\'eric Viry}
\affiliation{%
	\institution{Univ. Grenoble Alpes, VERIMAG}
	\city{Grenoble}
	\state{France}
}

\author{Andreas Podelski}
\author{Christian Schilling}
\affiliation{%
	\institution{University of Freiburg}
	\city{Freiburg}
	\state{Germany}
}

\renewcommand{\shortauthors}{S. Bogomolov et al.}

\begin{abstract}
%
Approximating the set of reachable states of a dynamical system is an algorithmic yet  mathematically rigorous way to reason about its safety.
Although progress has been made in the development of efficient algorithms for affine
dynamical systems, available algorithms still lack scalability to ensure their wide adoption in the industrial setting.
While modern linear algebra packages are efficient for matrices with tens of thousands of dimensions, set-based image computations are limited to a few hundred.
We propose to decompose reach set computations such that set operations are performed in low dimensions, while matrix operations like exponentiation are carried out in the full dimension.
Our method is applicable both in dense- and discrete-time settings. For a set of standard benchmarks, it shows a speed-up of up to two orders of magnitude compared to the respective state-of-the art tools, with only modest losses in accuracy.
For the dense-time case, we show an experiment with more than 10.000~variables, roughly two orders of magnitude higher than possible with previous approaches.

\end{abstract}

\keywords{reachability analysis, safety verification, linear time-invariant systems, set recurrence relation}

\maketitle

\section{Introduction}
\label{sec:introduction}
%
Verifying safety properties for dynamical systems is an important and intricate task.
For bounded time it is well known that the problem can be reduced to the computation of the reachable states.
We are interested in the set-based reachability problem for affine dynamical systems~\cite{GirardLGM06}.
Here, recurrence relations of the form
\begin{equation}\label{eq:discrete_affine_recurrence}
\X(k+1)= \Phi \X(k) \oplus \V(k),\quad k = 0,1,\ldots,N
\end{equation}
arise naturally. In the context of control engineering, the sequence of sets $\{\V(k)\}_k$ usually represents contributions from nondeterministic inputs or noise, $\oplus$ denotes the Minkowski sum between $n$-dimensional sets, $\Phi$ is a given real \nbyn matrix, and the set $\X(0)$ accounts for uncertain initial states.

Numerous works present strategies for solving
equation~\eqref{eq:discrete_affine_recurrence} in the form of
ellipsoids~\cite{KurzhanskiV00, kurzhanskiy2006ellipsoidal}, template
polyhedra such as zonotopes~\cite{Girard05, althoff2014reachability}
or support functions~\cite{LeGuernic2010250, frehse2011spaceex,
FrehseKLG13,DBLP:conf/hybrid/FrehseBGSP15,bogomolov-et-al:tacas-2017},
or a combination~\cite{althoff2016combining}. The problem also
generalizes to hybrid systems with piecewise affine
dynamics~\cite{AsarinDMB00, LeGuernicG09}. A key difficulty is
scalability, as the cost of some set operations increases
superlinearly with the dimension. A second challenge is the error
accumulation for increasing values of~$N$, known as the wrapping
effect.

\smallskip

In this paper, we propose, analyze, and evaluate a novel \emph{partial decomposition algorithm} for solving equation~\eqref{eq:discrete_affine_recurrence} that does not suffer from the wrapping effect if the inputs are held constant over all time.
The complexity  of non-decomposition approaches is mostly affected by the dimension~$n$ and grows superlinearly with it.
Our method partially shifts this dependence on~$n$ to other structural properties:
we perform set operations in low dimensions (unaffected by~$n$);
we effectively omit variables from the analysis if they are not involved in the property of interest;
and we exploit the sparsity of~$\Phi$ and its higher-order powers.
However, unlike other decomposition approaches, we keep the matrix computations in high dimensions, which allows us to produce precise approximations.
The strategy consists of decomposing the discrete recurrence relation~\eqref{eq:discrete_affine_recurrence} into subsystems of low dimensions.
Then we compute the reachable states for each subsystem; these low-dimensional set operations can be performed efficiently.
Finally we compose the low-dimensional sets symbolically and project onto the desired output variables.
The analysis scales to systems with tens of thousands of variables, which are out of scope of state-of-the-art tools for dense-time reachability. 

\smallskip

We apply our method to compute reachable states and verify safety properties of affine dynamical systems,
\begin{equation}\label{eq:continuous_system}
 x'(t)= A x(t) + Bu(t).
\end{equation}
The initial state can be any point in a given set $\X_0$, and $u(t) \in \U(t) \subset \Reals^m$ is a nondeterministic input. Both the initial set and the set of input functions are assumed to be compact and convex. We also consider observable outputs,
\begin{equation}\label{eq:output_system}
y(t)= C x(t) + Du(t),
\end{equation}
where $C$ and $D$ are matrices of appropriate dimension. In mathematical systems theory,  equations~\eqref{eq:continuous_system}-\eqref{eq:output_system} define what is known as a linear time-invariant (LTI) system.

\paragraph{Contribution}

We present a new method to solve the reachability problem for affine dynamical systems with nondeterministic inputs and experimentally show that it is highly scalable under modest loss of accuracy.
More precisely:

\begin{itemize}

\item We provide a new decomposition approach to solve equation~\eqref{eq:discrete_affine_recurrence} and analyze the approximation error.

\item We address both the dense time and the discrete time instances of the reachability problem for general LTI systems of the form~\eqref{eq:continuous_system}-\eqref{eq:output_system}.

\item We implement our approach efficiently and demonstrate its scalability on real engineering benchmarks.
The tool, source code, and benchmark scripts are publicly available~\cite{tool}.
\end{itemize}

\paragraph{Related work}

Kaynama and Oishi consider a Schur-based decomposition to compute the reachable states~\cite{kaynama2009schur, kaynama2010overapproximating, kaynama2011complexity}. They approximate the result for subsystems by nondeterministic inputs using a static (i.e., time-unaware) box approximation. The authors also address approximation errors by solving a Sylvester equation to obtain a similarity transformation that minimizes the submatrix coupling.

For systems where variables are linearly correlated in the initial states and inputs are constant, Han and Krogh propose an approximation method that uses Krylov subspace approximations~\cite{HanK06} without explicitly decomposing the system.

If the system is singularly perturbed with different time scales (``slow and fast variables''), time-scale decomposition can be applied~\cite{Dontchev92,GoncharovaO09}. We do not consider this setting here.

The reachability analysis tool Coho uses \emph{projectahedra} -- an approximate polyhedron representation consisting of all possible axis-aligned 2D projections -- for set representation~\cite{GreenstreetM99,YanG08a}.

Seladji and Bouissou define a sub-polyhedra abstract domain based on support functions~\cite{SeladjiB13}.
Our approach can choose directions dynamically, and independently for each subsystem.

An orthogonal approach to reduce the complexity of system analysis is known as \emph{model order reduction} (MOR)~\cite{antoulas2001survey}.
The idea is to construct a lower-dimensional model with \emph{similar} behavior.
Recently there have been efforts to combine MOR and abstraction
techniques to obtain a sound overapproximation~\cite{TranNXJ17}. In a
further approach, Bogomolov et al.~\cite{DBLP:conf/atva/BogomolovMP10}
suggest an abstraction technique, which employs dwell time
bounds. Moreover,
Bogomolov~et~al.~\cite{DBLP:conf/hybrid/BogomolovHMWP14} introduce a
system transformation to reduce the state space dimension based on the
notion of quasi-dependent variables, which captures the dependencies
between system state variables. In principle, such methods could be
used as a preprocessing for our approach, where the approximation
errors would then be combined.

Bak and Duggirala check safety properties and compute counterexample traces for LTI systems in a ``simulation equivalent manner''~\cite{bak2017simulation}. A reachable set computed in this way consists of all the states that can be reached by a fixed-step simulation for any choice of the initial state and piecewise constant input. This set, however, does not include all trajectories of equation~\eqref{eq:continuous_system}. The simulation equivalent reachability also involves a recurrence of the type~\eqref{eq:discrete_affine_recurrence}, and we study its decomposed form in this work as well. 

Decomposition methods have also been designed for the reachability problem of nonlinear ODEs. Chen et al.\ show that, using Hamilton-Jacobi methods, the (analytically) exact reachable states can be reconstructed from an analysis of the subsystems for general ODE systems~\cite{ChenHT17}.
The system needs, however, be composed of so-called \emph{self-contained subsystems}, which is a strong assumption.
The technique is based on~\cite{MitchellT03} which has no such limitation but suffers from a projection error.
For general LTI systems (which we consider) an approximation error is unavoidable.

Asarin and Dang propose a decomposition approach where they project away variables and abstract them by time-unaware differential inclusions~\cite{AsarinD04}.
To address the overapproximation, they split these variables again into several subdomains.

Chen and Sankaranarayanan apply uniform hybridization to analyze the subsystems over time and feed the results to the other subsystems as time-varying interval-shaped inputs~\cite{ChenS16decomposed}.
In contrast, our reachability algorithm needs not be performed iteratively because the analysis of each subystem is completely decoupled.

Schupp et al.\ decompose a system by syntactic independence~\cite{SchuppNA17}.
In our setting this corresponds to models where the dynamics matrix has a block diagonal form.
For such systems the dynamical error is zero in both their and our approach.

\smallskip

The paper is organized as follows. In Section~\ref{sec:preliminaries} we recall some basics on approximating convex sets with polyhedra, which motivate our choice for decomposing into two dimensions. We also recall a state-of-the art algorithm for approximating the reach sets of affine systems using the affine recurrence relation \eqref{eq:discrete_affine_recurrence}. In Section~\ref{sec:decomposition}, we start by considering the decomposition of a single affine map, and then develop the more general case of an affine recurrence.
The approximation error is discussed in Section~\ref{ssec:approximation_error}.
We present our reachability algorithm in Section~\ref{sec:implementation}, discuss the different techniques used to gain performance, and evaluate it experimentally in Section~\ref{sec:benchmarks}. Finally, we draw the conclusions and present perspectives for future work in Section~\ref{sec:conclusions}.

\section{Approximate Reachability of Affine Systems}
\label{sec:preliminaries}
%
In this section, we recall the state-of-the art in approximating the reachable set of an affine dynamical system.
\subsection{Preliminaries}

Let us introduce some notation. Let $\mathbb{I}_n$ be the identity matrix of dimension \nbyn.
For $p \geq 1$, the $p$-norm of an $n$-dimensional vector $x \in \R^n$ is denoted 
$\norm{x}_p$. The norm of a set $\X$ is $\norm{\X}_p=\max_{x \in \X} \norm{x}_p$.
Let $\Bpn$ be the unit ball of the $p$-norm in $n$ dimensions, i.e., $\Bpn = \{ x : \norm{x}_p \leq 1 \}$. 
The Minkowski sum of sets $\X$ and $\Y$ is $\X \oplus \Y := \{ x + y : x \in \X \text{ and } y \in \Y \}.$ Their Cartesian product, $\X\times \Y$, is the set of ordered pairs $(x, y)$, with $x \in \X$ and $y\in \Y$. The origin in $\R^n$ is written \norigin{n}. There is a relation between products of sets and Minkowski sum: if $\X \subseteq \Reals^n$ and $\Y \subseteq \Reals^m$, then
$\X \times \Y = (\X \times \{  \norigin{m} \} ) \oplus (\{ \norigin{n} \} \times \Y).$
The convex hull operator is written $\CH$. Let $\boxdot(\cdot)$ be the symmetric interval hull operator, defined for any $\X \subset \R^n$ as the $n$-th fold Cartesian product of the intervals $[-\vert \bar{x}_i\vert, \vert \bar{x}_i \vert]$ for all $i=1,\ldots,n$, where $\vert \bar{x}_i \vert := \sup\{\vert x_i \vert: x \in \X\}$.

\subsection{Polyhedral Approximation of a Convex Set}
\label{sec:set_approx}

We recall some basic notions for approximating convex sets. 
Let $\X \subset \Reals^n$ be a compact convex set. The \emph{support function} of $\X$ is the function $\rho_\X : \Reals^n\to \Reals$,
\begin{equation}\label{eq:support_function}
\rho_\X(\ell) := \max\limits_{x \in \X} \ell^\transp x.
\end{equation}
The farthest points of $\X$ in the direction $\ell$ 
are the \emph{support vectors}
\begin{equation}\label{eq:support_vector}
\sigma_\X(\ell) := \left\{ x \in \X : \ell^\transp x  = \rho_{\X}(\ell)  \right\}.
\end{equation}
When we speak of \emph{the} support vector, we mean the choice of any support vector in~\eqref{eq:support_vector}.
The projection of a set into a low dimensional space (a special case of $M \X$) can be conveniently evaluated using support functions, since $\sigma_{M\X}(\ell) = \sigma_\X(M^\transp\ell)$.
Given directions $\ell_1,\ldots,\ell_m$, a tight overapproximation of $\X$ is the \emph{outer polyhedron} given by the constraints 
\begin{equation}\label{eq:outerpoly}
\bigwedge_i \ell_i^\transp x \leq \rho_\X(\ell_i).
\end{equation}
For instance, a bounding box involves evaluating the support function in $2n$ directions. More precise approximations can be obtained by adding directions.
To quantify this, we use the following distance measure.
 A set $\Xhat$ is within Hausdorff distance $\varepsilon$ of $\X$ if and only if 
\begin{equation}\label{eq:hausdorff_distance}
\Xhat \subseteq \X \oplus \varepsilon\Bpn \text{ and } \X \subseteq \Xhat \oplus \varepsilon\Bpn.
\end{equation}
The infimum $\varepsilon \geq 0$ that satisfies~\eqref{eq:hausdorff_distance} is called the Hausdorff distance between $\X$ and $\Xhat$ with respect to the $p$-norm, and is denoted $\hausdorff^p\bigl(\X,\Xhat\bigr)$.
Another useful characterization of the Hausdorff distance is the following. Let $\X, \Y \subset \Reals^n$ be polytopes. Then
\begin{equation}
d^p_H(\X, \Y) = \max_{\ell \in \Bpn} |\rho_{\Y}(\ell) - \rho_{\X}(\ell)|. \label{eq:dpH_suppFun}
\end{equation}
In the special case $\X \subseteq \Y$, the absolute value can be removed.

By adding directions using Lotov's method~\cite{lotov2008modified}, the outer polyhedron in  \eqref{eq:outerpoly} is within Hausdorff distance $\varepsilon \norm{X}_p$ for $\mathcal{O}(\sfrac{1}{\varepsilon^{n-1}})$ directions, and this bound is optimal. It follows that accurate outer polyhedral approximations are possible only in low dimensions. For $n=2$, the bound can be lowered to  $\mathcal{O}(\sfrac{1}{\sqrt{\varepsilon}})$ directions, which is particularly efficient and the reason why we chose to decompose the system into subsystems of dimension 2.

\subsection{Trajectory, Reach Set, and Reach Tube} \label{ssec:trajectories_reachsets_reachtubes}

\begin{figure}
	\begin{center}
		\begin{tikzpicture}[
		domain=0:3.5,
		rt/.style={color=black},
		rtfill/.style={fill=orange},
		oa/.style={fill=yellow!40!white},
		init/.style={fill=green!80!black},
		bad/.style={fill=red}
	]
	%
	\draw[->] (-0.1,-0.5) -- (3.7,-0.5) node[right] {$t$};
	\draw[->] (0,-0.6) -- (0,2.3) node[above] {$x(t)$};
	\draw[oa] (0,-0.1) -- (0,0.9) -- (0.5,0.9) -- (0.5,-0.1) -- cycle;
	\draw[oa] (0.5,0.45) -- (0.5,1.35) -- (1,1.35) -- (1,0.45) -- cycle;
	\draw[oa] (1,0.8) -- (1,1.55) -- (1.5,1.55) -- (1.5,0.8) -- cycle;
	\draw[oa] (1.5,0.9) -- (1.5,1.55) -- (2,1.55) -- (2,0.9) -- cycle;
	\draw[oa] (2,0.55) -- (2,1.4) -- (2.5,1.4) -- (2.5,0.55) -- cycle;
	\draw[oa] (2.5,0.1) -- (2.5,1.05) -- (3,1.05) -- (3,0.1) -- cycle;
	\draw[oa] (3,-0.35) -- (3,0.5) -- (3.5,0.5) -- (3.5,-0.35) -- cycle;
	\draw[rt,name path=upper] plot (\x,{1.2 * sin(\x r) + 0.3});
	\draw[rt,name path=lower] plot (\x,{sin(\x r)});
	\tikzfillbetween[of=upper and lower,on layer=main]{rtfill}
	\draw[init] (-0.03,0) rectangle (0.03,0.3);
	\draw[bad] (0,1.8) rectangle (3.5,2.0);
	\draw (0,-0.1) -- (0,0.9) -- (0.5,0.9) -- (0.5,-0.1) -- cycle;
	\draw (0.5,0.45) -- (0.5,1.35) -- (1,1.35) -- (1,0.45) -- cycle;
	\draw (1,0.8) -- (1,1.55) -- (1.5,1.55) -- (1.5,0.8) -- cycle;
	\draw (1.5,0.9) -- (1.5,1.55) -- (2,1.55) -- (2,0.9) -- cycle;
	\draw (2,0.55) -- (2,1.4) -- (2.5,1.4) -- (2.5,0.55) -- cycle;
	\draw (2.5,0.1) -- (2.5,1.05) -- (3,1.05) -- (3,0.1) -- cycle;
	\draw (3,-0.35) -- (3,0.5) -- (3.5,0.5) -- (3.5,-0.35) -- cycle;
\end{tikzpicture}
	\end{center}
	\caption{Illustration of a reach tube (orange) with set of initial states (green) and an approximation (yellow) that shows absence of error states (red).
	}
	\label{fig:reach_tube}
\end{figure}
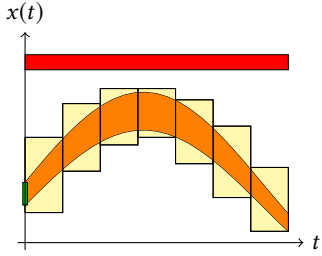

A \emph{trajectory} of the affine ODE with time-varying inputs~\eqref{eq:continuous_system} is the unique solution $x_{x_0,u}(t): [0, T] \rightarrow \Reals^n$, for a given initial condition $x_0$ at time $t=0$, and a given input signal $u$,
\begin{equation}\label{eq:pwa_analytic}
x_{x_0,u}(t) = e^{At}x_0 + \int_0^t e^{A(t-s)}u(s)\,ds,
\end{equation}
where we map $Bu(t)$ to $u(t)$ without loss of generality. Here $T$ is the time horizon, which is considered to be finite in this paper.
Given a set of initial states $\X_0$ and an input signal $u$, the \emph{reach set} at time $t$ is
$
\Reach(\X_0,u,t) := \{ x_{x_0,u}(t) : x_0 \in \X_0 \}.
$
This extends to a family of solutions as
\begin{equation}\label{eq:def_reach_ndetinp}
\Reach(\X_0, \U,t) = \bigcup \bigl\{\Reach(\X_0,u,t) : u(s) \in \U(s)\,, \forall~s \in [0, t] \bigr\}.\hspace*{-1mm}
\end{equation}
The \emph{reach tube} for a given time interval $[t_1, t_2] \subseteq [0, T]$ is the set
\begin{equation}\label{eq:reach_tube}
\Reach(\X_0,\U, [t_1, t_2]) := \bigcup_{t_1 \leq t \leq t_2} \Reach(\X_0,\U,t).
\end{equation}
In general, the reach tube can be computed only approximately.
An example reach tube and an overapproximation using boxes is shown in \fig{fig:reach_tube}.
In the next section we discuss how to compute such an overapproximation of the reach tube.

\subsection{Approximation Model}
\label{sec:approx_model}

The standard numerical approach for the reachability problem is to reduce it to computing a finite sequence of sets, $\{\X(k)\}_{k=0}^N$, that overapproximates the exact reach tube~\eqref{eq:reach_tube}. 
We assume a given constant time step size $\delta > 0$ over the time horizon $T=N\delta$, where $N$ is the number of time steps. 
With respect to the inputs, we assume that the time-varying function $\U(\cdot)$ from \sect{ssec:trajectories_reachsets_reachtubes} is piecewise constant, i.e., we consider a possibly time-varying discrete sequence $\{\U(k)\}_k$ for all $k=0,1,\ldots, N$.

In the dense-time case, one is interested in covering all possible trajectories of the given continuous system. In the discrete-time case, the reach tube of the discretized system is only covered at discrete time steps, but not necessarily between time steps. In either case, starting from the system~\eqref{eq:continuous_system}-\eqref{eq:output_system}, we can reduce the reachability problem to the general recurrence~\eqref{eq:discrete_affine_recurrence}, with suitably transformed initial states and nondeterministic input. These reductions can be found in previous works~\cite{LeGuernic2010250, frehse2011spaceex} and we recall them below.

First, we recall the dense time case. 
All continuous trajectories are covered by the discrete approximation if
\begin{equation}\label{eq:reach_set_overapproximation}
 \Reach(\X_0, \U, [k\delta, (k+1)\delta]) \subseteq \X(k),\quad  k=0,1,\ldots, N
\end{equation}
Previous works have provided approximation models such that~\eqref{eq:reach_set_overapproximation} holds~\cite{LeGuernic09, LeGuernic2010250, frehse2011spaceex}. In particular, in~\cite[Lemma 3]{frehse2011spaceex} the authors intersect a first-order approximation of the interpolation error going forward in time from $t=0$ with one that goes backward in time from $t=\delta$. Note that this forward-backward approximation is used in \spaceex, to which we will compare our method later.
Here, we consider the forward-only approximation. 
To guarantee that the overapproximation covers the interval between time steps, the  initial set and the input sets are bloated by additive terms%
\begin{equation*}
\begin{aligned}
E_\psi(\U(k), \delta) &:= \boxdot(\Phi_2(\vert A \vert, \delta) \boxdot(A \U(k))) \\
E^+(\X_0, \delta) &:= \boxdot(\Phi_2(\vert A \vert, \delta) \boxdot(A^2  \X_0)),
\end{aligned}
\end{equation*}
where the matrices $\Phi_{1}(A, \delta)$ and $\Phi_{2}(A, \delta)$ are defined via
\begin{equation*}
\Phi_1(A, \delta) := \sum\limits_{i=0}^\infty \dfrac{\delta^{i+1}}{(i+1)!}A^i,\hspace*{2mm} \Phi_2(A, \delta) := \sum\limits_{i=0}^\infty \dfrac{\delta^{i+2}}{(i+2)!}A^i.
\end{equation*}
The required transformations for dense time are:
\begin{flalign}\label{eq:approximation_model_dense_time}
\left\{\
\begin{aligned}
\Phi &\leftarrow e^{A \delta} \\
\X(0) &\leftarrow  \CH\big( \X_0, \Phi \X_0 \!\oplus\! \delta \U(0)
\!\oplus\! E_\psi(\U(0),\delta)\!\oplus\! E^+(\X_0, \delta)\big)\\
\V(k) &\leftarrow  \delta \U(k) \oplus E_\psi(\U(k), \delta),\quad \forall~k = 0,1,\ldots, N \hspace*{-5mm}
\end{aligned}\hspace*{-1em}
\right. &&
\end{flalign}
For discrete time reachability the transformations are:
\begin{flalign}\label{eq:approximation_model_discrete_time}
\left\{\
\begin{aligned}
\Phi &\leftarrow e^{A \delta} \\
\X(0) &\leftarrow  \X_0\\
\V(k) &\leftarrow  \Phi_1(A, \delta) \U(k),\quad \forall~k = 0,1,\ldots, N
\end{aligned}
\right. &&
\end{flalign}
Note that there is no bloating of the initial states, and that the inputs are assumed to remain constant between sampling times.

The cost of solving the general recurrence~\eqref{eq:discrete_affine_recurrence} with either the data~\eqref{eq:approximation_model_dense_time} or~\eqref{eq:approximation_model_discrete_time}, to compute an approximation of the reach set or the reach tube, increases superlinearly with the dimension of the system and the desired approximation error. In the rest of this paper, we will consider a decomposition of the system to reduce the computational cost.

\section{Decomposition}
\label{sec:decomposition}
%
In this section, we present a novel approach for solving the general recurrence~\eqref{eq:discrete_affine_recurrence} using block decompositions.

\subsection{Cartesian Decomposition}

From now on, let $\X \subset \Reals^n$ be a compact and convex set. To simplify the discussion, we assume $n$ to be even. We characterize the decomposition of $\X$ into $b:=n/2$ sets of dimension two as follows.
Let $\projmat_i$ be the \emph{projection matrix} that maps a vector $x \in \Reals^n$ to its coordinates in the $i$-th block,
$x_i = \projmat_i x$.
The \emph{Cartesian decomposition} of $\X$ is the set
$$\exactdecompose(\X) := \projmat_1 \X \times \cdots \times \projmat_b \X.$$
We call a set \emph{decomposed} if it is identical to its Cartesian decomposition. For instance, the symmetric interval hull $\boxdot(\X)$ is a decomposed set, since it is the Cartesian product of one-dimensional sets, i.e., intervals.
Throughout the paper, we will highlight decomposed sets with the symbol $\hat\cdot$ (as in $\Xhat, \Yhat$).
Note that decomposition distributes over Minkowski sum:
\begin{align}
\exactdecompose(\X \oplus \Y) &= \exactdecompose(\X) \oplus \exactdecompose(\Y) \label{eq:decomp_minkowski}.
\end{align}

If $\X$ is a polyhedron in constraint form, the projections can be very costly to compute, which amounts to quantifier elimination. 
However, using the methods in \sect{sec:set_approx}, we can efficiently compute an overapproximation. The overapproximation can be coarse, e.g., a bounding box, or $\varepsilon$-close in the Hausdorff norm for a given value of $\varepsilon$.
Since the choice of approximation is of no particular importance to the remainder of the paper, we simply assume an operator $$\Xhat_1\times \cdots \times \Xhat_b =\decompose(\X)$$ that overapproximates the Cartesian decomposition with a decomposed set $\Xhat_1\times \cdots \times \Xhat_b$ such that 
$\exactdecompose(\X) \subseteq \decompose(\X).$

\subsection{Decomposing an Affine Map} \label{ssec:decomposing_an_affine_map}

\begin{figure}[t]
	\begin{center}
		\includegraphics[scale=0.4, trim=0 20 0 0]{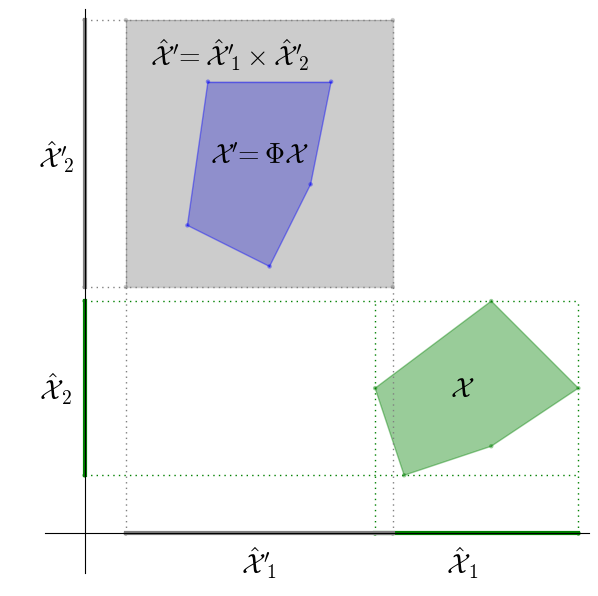}
	\end{center}
	\caption{The Cartesian decomposition of $\X$ (green) in two blocks of size one is the set $\decompose(\X)=\Xhat_1\times \Xhat_2$. The decomposed image of the map $\X'=\Phi\X$ (blue) is the set $\Xhat'$ (gray) obtained by the application of Eq.~\eqref{eq:linear_map_decomposed} for each block $\Xhat_j'$.
	}
	\label{fig:box_approximation}
\end{figure}

Suppose that a compact and convex set $\set{V}\subset \Reals^n$ is given, and let $\Phi$ be a real \nbyn matrix. Consider the $n$-dimensional \emph{affine map}
\begin{equation}\label{eq:linear_map_original}
\X' = \Phi \X \oplus \V = \begin{pmatrix}
\Phi_{11} & \cdots & \Phi_{1b} \\
\vdots & \ddots & \vdots \\
\Phi_{b1} & \cdots & \Phi_{bb} 
\end{pmatrix}
\X \oplus \V,
\end{equation}
where~$\Phi_{ij}$ denotes the \nbyn[2] submatrix of~$\Phi$ in row~$i$ and column~$j$, counting from top to bottom and from left to right. We call such a submatrix a \emph{block}, and $[\Phi_{i1} \Phi_{i2} \cdots \Phi_{ib}]$ a \emph{row-block}. We assume without loss of generality that $n$ is even. Hence, every row-block of~$\Phi$ consists of $b := n/2$ two-dimensional blocks. 

The \emph{decomposed image} of the map~\eqref{eq:linear_map_original} is obtained in two steps. First, we transform the full-dimensional sets $\X$ and $\V$ into Cartesian products of two-dimensional sets $\Xhat_1, \ldots, \Xhat_b$, using the operator $\decompose$  described in the previous section.
Second, we construct the two-dimensional sets
\begin{equation}\label{eq:linear_map_decomposed}
\Xhat_i' := \bigoplus_{j=1}^b \Phi_{ij} \Xhat_j \oplus \Vhat_i,\quad \forall~i = 1,\ldots, b.
\end{equation}
We call the Cartesian product $\Xhat' = \Xhat'_1 \times \cdots \times \Xhat'_b$ the decomposed image of~\eqref{eq:linear_map_original}. For each $i$, $\Xhat'_i$ only depends on the $i$-th row-block of $\Phi$. We illustrate in \fig{fig:box_approximation} the decomposed image of an affine map over a polygon.

We now compare the cost of~\eqref{eq:linear_map_original} and~\eqref{eq:linear_map_decomposed}. Let us denote the cost of computing the image of an \nbyn linear map by $C_\odot(n, m)$ and the cost of computing the Minkowski sum of two $n$-dimensional sets by $C_\oplus(n, m)$, where $m$ is a parameter that depends on the set representation. The asymptotic complexity of performing the above operations for common set representations is shown in Table~\ref{table:costSetsOps}; we refer to~\cite{Fukuda04,Monniaux10quantifiers,Girard05,GirardLG08} for further details. Since one Minkowski sum and one linear map are involved in~\eqref{eq:linear_map_original}, we have that 
$$
\text{Cost~\eqref{eq:linear_map_original}} \in \Oof{C_\odot(2b, m) + C_\oplus(2b, m)}.
$$
On the other hand,
the aggregated cost for the $i$-th  block in~\eqref{eq:linear_map_decomposed} is $b C_\oplus(2, m') + b C_\odot(2, m')$, where $m'$ is the parameter for complexity ($m$) in two dimensions. The total cost is thus 
$$
\text{Cost~\eqref{eq:linear_map_decomposed}} \in \Oof{b^2 C_\odot(2, m') + b^2 C_\oplus(2, m')}.
$$
Whenever $C_\odot$ and $C_\oplus$ depend at least quadratically on the dimension, and since $m' \ll m$, the cost of the decomposed image~\eqref{eq:linear_map_decomposed} is asymptotically smaller than the cost of the non-decomposed image~\eqref{eq:linear_map_original}.

\def\tabspace{\;}
\begin{table}[t]
	\caption{
	Complexity of set operations involved in the affine map computation by decomposition.
	}
	\label{table:costSetsOps}
	\parbox{\columnwidth}{
	\centering
	\begin{tabular}{@{} r @{\tabspace}  @{\tabspace} c @{\tabspace}  @{\tabspace} c @{\tabspace}  @{\tabspace} c @{\tabspace}  @{\tabspace} c @{}}
		\toprule
		& \multicolumn{2}{c @{\tabspace}  @{\tabspace}}{polyhedra} & zonotopes & supp. fun. \\
		\cline{2-3}
		& $m$ constraints & $m$ vertices & $m$ generat. & $m$ direct. \\
		\midrule
		&&&& \\[-3mm]
		$C_\odot(n, m)$ & $\Oof{mn^2+n^3}$ & $\Oof{mn^2}$ & $\Oof{m n^2}$ & $\Oof{m n^2 \mathcal{L}}$ \\[1mm]
		$C_\oplus(n, m)$ & $\Oof{2^n}$ & $\Oof{m^2n}$ & $\Oof{n}$ & $\Oof{m \mathcal{L}}$ \\
		\bottomrule
	\end{tabular}}
\begin{tablenotes}
	\footnotesize
	$\mathcal{L}$ is the cost of evaluating the support function of~$\set{X}$.
	For polyhedra in constraint representation we assume that $\Phi$ is invertible; otherwise the complexity is $\Oof{m^n}$.
	Note that~$m$ is not comparable between different representations.
\end{tablenotes}
\end{table}

\subsection{Decomposing an Affine Recurrence} \label{ssec:decomposing_an_affine_recurrence}

Let us reconsider the affine recurrence in Eq.~\eqref{eq:discrete_affine_recurrence}.
We can rewrite it into~$b$ row-blocks, as in \sect{ssec:decomposing_an_affine_map}, with given compact and convex sequence $\{\V(k)\}_k \subset \R^n$ for $k\geq 0$, and an initial set $\X(0)$:
\begin{equation}\label{eq:linear_map}
\begin{aligned}
\X(k+1)
&= \begin{pmatrix}
\Phi_{11} & \cdots & \Phi_{1b} \\
\vdots & \ddots & \vdots \\
\Phi_{b1} & \cdots & \Phi_{bb} 
\end{pmatrix} 
\X(k) \oplus \V(k).
\end{aligned}
\end{equation}
In this recurrence, the approximation error of the $k$-th step is propagated, and possibly amplified, in step $k+1$. This can be partly avoided by using a non-recursive form ~\cite{GirardLGM06}. We present two scenarios, which differ in whether the sequence of input sets is constant or not.
Let $\Phi^k_{ij}$ be the submatrix of $\Phi^k$ corresponding to the indices of the submatrix $\Phi_{ij}$ of~$\Phi$.

\paragraph{Constant input sets}

Assuming that the sets $\set{V}$ do not depend on~$k$, the non-recurrent form of~\eqref{eq:linear_map} is:
\begin{flalign*}
\left\{\ 
\begin{aligned}
\X(k) &= \Phi^k  \X(0) \oplus \W(k) \\
\W(k+1) &=  \W(k) \oplus \Phi^k \V, \quad \W(0) :=  \{\norigin{n}\}.
\end{aligned}
\right. &&
\end{flalign*}
The decomposed map, for $i=1,\ldots, b$, is:
\begin{flalign}\label{eq:constant_nonrecurrent_decomposed}
\left\{\
\begin{aligned}
\Xhat_i(k) &= \bigoplus_{j=1}^b \Phi_{ij}^k  \set{\hat X}_j(0) \oplus \What_i(k) \\
\What_i(k+1) &= \What_i(k) \oplus [\Phi^k_{i1}\cdots \Phi^k_{ib}] \V, \quad \What_i(0) :=  \{\norigin{2}\}. \hspace*{-5mm}
\end{aligned}
\right. \hspace*{-2mm}&&
\end{flalign}
Note that the set $[ \Phi_{i1}^k \cdots  \Phi_{ib}^k ] \V$ in~\eqref{eq:constant_nonrecurrent_decomposed}
is of low dimension and corresponds to the $i$-th block. 

\paragraph{Time-varying input sets}

Assuming that the sequence of inputs depends on~$k$, the non-recurrent form of~\eqref{eq:linear_map} is:
\begin{flalign*}
\left\{\ 
\begin{aligned}
\X(k) &= \Phi^k  \X(0) \oplus \W(k) \\
\W(k+1) &=  \Phi\W(k) \oplus  \V(k), \quad \W(0) :=  \{\norigin{n}\}.
\end{aligned}
\right. &&
\end{flalign*}
The decomposed map, for $i=1,\ldots, b$, is:
\begin{flalign}\label{eq:variant_nonrecurrent_decomposed}
\left\{\ 
\begin{aligned}
\Xhat_i(k) &= \bigoplus_{j=1}^b \Phi_{ij}^k  \set{\hat X}_j(0) \oplus \What_i(k)\\
\What_i(k+1) &= \bigoplus_{j=1}^b \Phi_{ij} \What_j(k) \oplus \Vhat_i, \quad \What_i(0) := \{\norigin{2}\}. \hspace*{-5mm}
\end{aligned}
\right. &&
\end{flalign}

\section{Approximation error}
\label{ssec:approximation_error}
%
In general, the reduction in the computational cost of the decomposed image comes at the price of an approximation error for $\Xhat_i(k)$. We discuss the two sources of this error.

The first one is due to the decomposition of the initial states. For discrete time reachability, the initial set $\X_0$ remains unchanged under the transformations~\eqref{eq:approximation_model_discrete_time}. In practice, $\X_0$ often has the shape of a hyperrectangle, and hence there is no approximation error. However, for dense time reachability the transformations~\eqref{eq:approximation_model_dense_time} do not preserve an initially decomposed set, and \decompose~invariably introduces an approximation error. If the constraints on $\X(0)$ are known, an upper bound on the Hausdorff distance $d_H^p(\X(0), \Xhat(0))$ can be obtained using support functions~\cite{lotov2008modified}.

The second source of the approximation error is the step-wise decomposition of the inputs. This can be either a linear combination with respect to a row-block, as in~\eqref{eq:constant_nonrecurrent_decomposed}, or a single block as in~\eqref{eq:variant_nonrecurrent_decomposed}. For a stable matrix $\Phi$, in either case, the error propagated to $\What_i(k+1)$ goes to zero for $k \to \infty$. 
In the rest of the section, we discuss these errors in more detail.

\subsection{Error of a Decomposed Affine Map}
We now turn to the question how big the decomposition error is in the decomposed affine map~\eqref{eq:linear_map_decomposed} compared to~\eqref{eq:linear_map_original}.
To simplify the discussion, we omit $\V$ without loss of generality, since we can rephrase~\eqref{eq:linear_map_original} with an augmented state space where $\X^* \gets \X \times \V$ and $\Phi^* \gets [ \Phi\ I ]$. Then $\X' = \Phi^* \X^*$.

We proceed in two steps. First, we bound the error for a set that is already decomposed. Then we bound the distance between the image of the decomposed and the original set. The total error  follows from a triangle inequality.
Let $\set{\Xhat}$ be a decomposed set,
and let $\set{\bar X}'$ be the image of $\Xhat$ under the linear map $$\set{\bar X}'=\Phi\set{\Xhat}$$
 and let $\set{\hat X}'$  the image of $\Xhat$ under the decomposed map~\eqref{eq:linear_map_decomposed}. 
\begin{proposition}\label{prop:HDistDecompMap}
$\set{\bar X}' \subseteq \set{\hat X}'$ and
\begin{equation}\label{eq:error_crossprod}
\hspace*{-1mm}
d_H^p(\set{\bar X}', \set{\hat X}') =  \max_{\Vert d \Vert_p \leq 1} \sum_{i,j} \rho_{ \set{\hat X}_j}(\Phi_{ij}^T d_i) - \rho_{ \set{\hat  X}_j}\left(\sum_k \Phi_{kj}^T d_k \right)
\hspace*{-1mm}
\end{equation}
where the $\max$ is taken over $d = d_1\times \cdots \times d_b$ in the unit ball of the $p$-norm, and $\Phi_{ij}^T := (\Phi_{ij})^T$.
\end{proposition}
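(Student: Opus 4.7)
The plan is to handle the two assertions separately: the set inclusion $\bar{\mathcal{X}}' \subseteq \hat{\mathcal{X}}'$ by an element-wise argument, and the Hausdorff distance formula by unwinding the support-function characterization~\eqref{eq:dpH_suppFun} using the block structure.

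For the inclusion, I would pick an arbitrary $x \in \bar{\mathcal{X}}' = \Phi \hat{\mathcal{X}}$, write $x = \Phi y$ for some $y = (y_1,\ldots,y_b) \in \hat{\mathcal{X}}_1 \times \cdots \times \hat{\mathcal{X}}_b$, and observe that the $i$-th block of $x$ equals $\sum_j \Phi_{ij} y_j$, which by definition lies in $\bigoplus_j \Phi_{ij}\hat{\mathcal{X}}_j = \hat{\mathcal{X}}'_i$. Thus $x \in \hat{\mathcal{X}}'_1 \times \cdots \times \hat{\mathcal{X}}'_b = \hat{\mathcal{X}}'$. This inclusion lets me drop the absolute value when invoking~\eqref{eq:dpH_suppFun}, so
\[
d_H^p(\bar{\mathcal{X}}',\hat{\mathcal{X}}') = \max_{\|d\|_p \leq 1} \bigl(\rho_{\hat{\mathcal{X}}'}(d) - \rho_{\bar{\mathcal{X}}'}(d)\bigr).
\]

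For the distance formula, I would split $d = (d_1,\ldots,d_b)$ block-wise and evaluate each support function. On the one hand, since $\hat{\mathcal{X}}'$ is a Cartesian product and each $\hat{\mathcal{X}}'_i$ is a Minkowski sum, the standard identities $\rho_{\mathcal{A} \times \mathcal{B}}(u,v) = \rho_\mathcal{A}(u)+\rho_\mathcal{B}(v)$, $\rho_{\mathcal{A}\oplus\mathcal{B}} = \rho_\mathcal{A}+\rho_\mathcal{B}$, and $\rho_{M\mathcal{A}}(\ell) = \rho_\mathcal{A}(M^\transp \ell)$ give
\[
\rho_{\hat{\mathcal{X}}'}(d) = \sum_{i=1}^b \rho_{\hat{\mathcal{X}}'_i}(d_i) = \sum_{i,j} \rho_{\hat{\mathcal{X}}_j}(\Phi_{ij}^\transp d_i).
\]
On the other hand, $\rho_{\bar{\mathcal{X}}'}(d) = \rho_{\Phi\hat{\mathcal{X}}}(d) = \rho_{\hat{\mathcal{X}}}(\Phi^\transp d)$, and the $j$-th block of $\Phi^\transp d$ is exactly $\sum_k \Phi_{kj}^\transp d_k$ because the $(j,k)$-block of $\Phi^\transp$ is $\Phi_{kj}^\transp$. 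Since $\hat{\mathcal{X}}$ is decomposed, another application of the Cartesian product identity yields
\[
\rho_{\bar{\mathcal{X}}'}(d) = \sum_{j=1}^b \rho_{\hat{\mathcal{X}}_j}\!\left(\sum_k \Phi_{kj}^\transp d_k\right).
\]
Subtracting the two expressions and taking the maximum over $d$ in the unit $p$-ball recovers the claimed identity~\eqref{eq:error_crossprod}.

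The proof is essentially a bookkeeping exercise rather than a deep argument: the only place one must be careful is the transpose of a block matrix, i.e.\ correctly identifying $(\Phi^\transp)_{jk} = \Phi_{kj}^\transp$ so that the inner sum in the second term lines up with the statement. Sublinearity of the support function (which would give $\leq$ in one direction automatically) is not needed since we have already established the inclusion and are computing exact support-function values.
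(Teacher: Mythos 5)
Your proposal is correct and follows essentially the same route as the paper's proof: compute $\rho_{\hat{\mathcal{X}}'}(d)=\sum_{i,j}\rho_{\hat{\mathcal{X}}_j}(\Phi_{ij}^\transp d_i)$ and $\rho_{\bar{\mathcal{X}}'}(d)=\sum_j\rho_{\hat{\mathcal{X}}_j}\bigl(\sum_k\Phi_{kj}^\transp d_k\bigr)$ via the Cartesian-product, Minkowski-sum, and linear-map identities, then substitute into the support-function characterization of the Hausdorff distance. Your explicit element-wise verification of the inclusion $\bar{\mathcal{X}}'\subseteq\hat{\mathcal{X}}'$ (which justifies dropping the absolute value) is a small, correct addition that the paper leaves implicit.
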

\begin{corollary}
	If only one $\Phi_{ij}$ per column is nonzero, then the error is zero.
	A special case of such a matrix is the (real) Jordan form if all eigenvalues have multiplicity 1.
\end{corollary}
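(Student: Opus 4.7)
The plan is to read the conclusion straight off the Hausdorff-distance formula in Proposition~\ref{prop:HDistDecompMap}. For each column~$j$ let $i(j)$ denote the unique row index (when it exists) with $\Phi_{i(j),j}\neq 0$; by hypothesis $\Phi_{ij}=0$ for every other $i$ in column~$j$. Since the support function of any non-empty set satisfies $\rho_{\Xhat_j}(0)=0$, every contribution with $i\neq i(j)$ in the inner $i$-sum of the first term of~\eqref{eq:error_crossprod} vanishes, so that sum collapses to $\rho_{\Xhat_j}(\Phi_{i(j),j}^T d_{i(j)})$. For the same reason the argument of the second support function, $\sum_k\Phi_{kj}^T d_k$, reduces to the single surviving vector $\Phi_{i(j),j}^T d_{i(j)}$. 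Hence each $j$-summand in~\eqref{eq:error_crossprod} is identically zero, the maximum over $d$ in the unit ball is~$0$, and combined with the inclusion $\set{\bar X}'\subseteq\set{\hat X}'$ already stated in Proposition~\ref{prop:HDistDecompMap} this yields $\set{\bar X}'=\set{\hat X}'$.

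For the Jordan-form special case, I would recall that when every eigenvalue of a real matrix~$\Phi$ has algebraic multiplicity~$1$, its real Jordan form is block diagonal with $1{\times}1$ blocks for the real eigenvalues and $2{\times}2$ rotation-type blocks for each pair of complex conjugate eigenvalues. Since $n$ is even, the number of $1{\times}1$ blocks is even as well, so a permutation of coordinates pairs them into $2{\times}2$ diagonal super-blocks aligned with our decomposition. In the permuted basis every $2{\times}2$ super-column of $\Phi$ contains exactly one nonzero super-block, which is the column-sparsity hypothesis just verified.

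The only slightly delicate step is notational: one has to keep the row index~$i$ in the first term of~\eqref{eq:error_crossprod} distinct from the summation index~$k$ inside the second support function, and observe that the column-sparsity assumption makes both sums degenerate to the same single surviving term. Once this identification is made the cancellation is immediate, and the Jordan-form observation reduces to a standard application of the real normal-form theorem plus a permutation of the basis.
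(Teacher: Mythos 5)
Your argument is correct and is exactly the intended (implicit) justification: the paper states this corollary without proof as an immediate consequence of Proposition~\ref{prop:HDistDecompMap}, and your observation that $\rho_{\Xhat_j}(0)=0$ collapses both the inner $i$-sum and the argument $\sum_k\Phi_{kj}^T d_k$ to the same single term, making each $j$-summand in~\eqref{eq:error_crossprod} vanish, is precisely that consequence. The remark about permuting the real Jordan blocks so that the $2\times 2$ blocks align with the decomposition is a worthwhile extra care that the paper glosses over.
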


We can simplify this bound to clarify the relationship between the error and the norm of the matrix blocks $\Phi_{ij}$.
 To quantify the error associated to the $i$-th block, let us introduce the number $\Delta_i$ to be the diameter of $\Xhat_i$, i.e., the smallest number such that for any $d$,
$$\rho_{\Xhat_i}(d)+\rho_{\Xhat_i}(-d) \leq  \norm{d}_{\tfrac{p}{p-1}} \Delta_i.$$
For instance, if $\Xhat_i$ is an interval hull, then $\Delta_i$ is the width of the largest interval.
Then the error bound is a weighted sum of the diameter of state sets:
\begin{proposition}\label{prop:HDistDecompMap_simplified3}
For $j=1,\ldots,b$, let $q_j := \argmax_i \norm{\Phi_{ij}}_p$ (the index of the block with the largest matrix norm in the \mbox{$j$-th} column-block), so that
 $\alpha_j := \max_{i \neq q_j} \norm{\Phi_{ij}}_p$ is the second largest matrix norm in the $j$-th column-block. The error of the decomposed map is
\begin{equation}\label{eq:HDistDecompMap_simplified3_ppetit}
\hausdorff^p\bigl(\set{\bar X'},\set{\hat X}'\bigr) \leq 
(b-1) \sum_{j=1}^b \alpha_j \Delta_j \leq \frac{n}{2} \alpha_\mathrm{max} \Delta_\mathrm{sum},
\end{equation}
where $\alpha_\mathrm{max} := \max_j \alpha_j$ and $\Delta_\mathrm{sum} :=\sum_{j=1}^b \Delta_j$.
\end{proposition}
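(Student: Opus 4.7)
The plan is to start from the exact expression for the Hausdorff distance given in Proposition~\ref{prop:HDistDecompMap} and bound, one column-block $j$ at a time, the inner quantity
$$S_j(d) := \sum_{i=1}^{b} \rho_{\hat X_j}(\Phi_{ij}^T d_i) - \rho_{\hat X_j}\!\left(\sum_{i=1}^{b} \Phi_{ij}^T d_i\right).$$
The idea is to single out the ``dominant'' row index $q_j$ in each column and use sublinearity of the support function to absorb only the remaining $b-1$ tail terms, which are the ones that carry the small factor $\alpha_j$.

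The first step is to apply subadditivity of $\rho_{\hat X_j}$ in the form $\rho_{\hat X_j}(a+c) \geq \rho_{\hat X_j}(a) - \rho_{\hat X_j}(-c)$, obtained by writing $\rho_{\hat X_j}(a) = \rho_{\hat X_j}((a+c)+(-c)) \leq \rho_{\hat X_j}(a+c) + \rho_{\hat X_j}(-c)$. Setting $a = \Phi_{q_j j}^T d_{q_j}$ and $c = r_j := \sum_{i \neq q_j} \Phi_{ij}^T d_i$ yields
$$\rho_{\hat X_j}\!\left(\sum_{i} \Phi_{ij}^T d_i\right) \geq \rho_{\hat X_j}(\Phi_{q_j j}^T d_{q_j}) - \rho_{\hat X_j}(-r_j).$$
Substituting into $S_j(d)$ cancels the $q_j$ term from the sum, and a second application of subadditivity to $\rho_{\hat X_j}(-r_j)$ gives
$$S_j(d) \leq \sum_{i \neq q_j} \left[\rho_{\hat X_j}(\Phi_{ij}^T d_i) + \rho_{\hat X_j}(-\Phi_{ij}^T d_i)\right].$$

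The second step is to invoke the definition of $\Delta_j$ with direction $\Phi_{ij}^T d_i$, converting each summand to $\|\Phi_{ij}^T d_i\|_{p/(p-1)}\,\Delta_j$, then bound $\|\Phi_{ij}^T d_i\|_{p/(p-1)} \leq \|\Phi_{ij}\|_p\,\|d_i\|_p$ via the induced operator norm (interpreting $\|\Phi_{ij}\|_p$ as the $p$-to-$p$ operator norm, which also bounds $\Phi_{ij}^T$ between the corresponding dual norms). Since $\|d\|_p \leq 1$ implies $\|d_i\|_p \leq 1$ for every block, and since $\|\Phi_{ij}\|_p \leq \alpha_j$ for all $i \neq q_j$ by definition of $\alpha_j$, one obtains $S_j(d) \leq (b-1)\,\alpha_j\,\Delta_j$.

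The final step is to sum over $j$ and take the $\max$ over $d$ in the unit $p$-ball, yielding $d_H^p(\bar{\X}',\hat{\X}') \leq (b-1)\sum_{j=1}^{b} \alpha_j\,\Delta_j$, which is the first inequality in~\eqref{eq:HDistDecompMap_simplified3_ppetit}. The second inequality follows from $\alpha_j \leq \alpha_{\max}$, $\sum_j \Delta_j = \Delta_{\mathrm{sum}}$, and $b - 1 \leq b = n/2$. The main subtlety I expect is the norm bookkeeping in the second step: for $p=2$ the transpose inequality $\|\Phi_{ij}^T d\|_2 \leq \|\Phi_{ij}\|_2\|d\|_2$ is immediate, but for general $p$ one has to be explicit that $\|\Phi_{ij}\|_p$ denotes the induced operator norm so that $\Phi_{ij}^T$ is bounded between the dual spaces by the same constant; everything else is bookkeeping with sublinearity.
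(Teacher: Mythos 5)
Your proof follows essentially the same route as the paper's: isolate the dominant index $q_j$ via the sublinearity inequality $\rho_{\hat{\X}_j}(u+v)\ge\rho_{\hat{\X}_j}(u)-\rho_{\hat{\X}_j}(-v)$, convert the surviving $b-1$ terms per column-block using the diameter $\Delta_j$, and finish with operator-norm bookkeeping. The one (shared) caveat is the very last norm step: duality gives $\Vert\Phi_{ij}^T d_i\Vert_{p/(p-1)}\le\Vert\Phi_{ij}\Vert_p\,\Vert d_i\Vert_{p/(p-1)}$, and replacing $\Vert d_i\Vert_{p/(p-1)}$ by $\Vert d_i\Vert_p\le 1$ as you do requires $p\le 2$ --- exactly the restriction the paper itself invokes before its closing remark.
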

We can interpret Prop.~\ref{prop:HDistDecompMap_simplified3} as a generalization of the observation that the approximation error is small if the off-diagonal entries of $\Phi$ are small. 

We now come to the second step. We compare the image $\X'=\Phi \X$ with the decomposed image $\Xhat'$, including both the decomposition error from $\X$ to $\Xhat$ and the error introduced by the decomposed map~\eqref{eq:linear_map_decomposed}. We use a simple lemma:
\begin{lemma}\label{lem:distmap}
Let $\X'=\Phi\set{\X}$ and $\set{\bar X}'=\Phi\set{\Xhat}$, where $\set{\X} \subseteq \Xhat$. The distance between the images is
$\hausdorff^p\bigl(\set{X'},\set{\bar X'}\bigr) \leq 
\norm{\Phi}_p \hausdorff^p\bigl(\set{\X},\set{\Xhat}\bigr).
$
\end{lemma}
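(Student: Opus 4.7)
The plan is to unfold the definition of Hausdorff distance, apply $\Phi$, and use the operator-norm inequality to bound $\Phi B_p^n$. Since $\X \subseteq \Xhat$ implies $\Phi \X \subseteq \Phi \Xhat$, i.e., $\X' \subseteq \bar\X'$, the Hausdorff distance reduces to the one-sided containment in \eqref{eq:hausdorff_distance}: it suffices to find the smallest $\varepsilon \geq 0$ with $\bar\X' \subseteq \X' \oplus \varepsilon B_p^n$.

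The first step is to set $\varepsilon_0 := d_H^p(\X, \Xhat)$ and observe that, because $\X \subseteq \Xhat$, the one-sided version of \eqref{eq:hausdorff_distance} gives $\Xhat \subseteq \X \oplus \varepsilon_0 B_p^n$. The second step is to apply $\Phi$ to both sides. Using the fact that linear maps commute with Minkowski sum, $\Phi(S \oplus T) = \Phi S \oplus \Phi T$, I obtain
\begin{equation*}
\bar\X' = \Phi \Xhat \subseteq \Phi\bigl(\X \oplus \varepsilon_0 B_p^n\bigr) = \Phi \X \oplus \varepsilon_0 \Phi B_p^n = \X' \oplus \varepsilon_0 \Phi B_p^n.
\end{equation*}

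The third step is to bound $\Phi B_p^n \subseteq \|\Phi\|_p\, B_p^n$. This is the defining inequality of the induced $p$-norm: for any $x$ with $\|x\|_p \leq 1$, we have $\|\Phi x\|_p \leq \|\Phi\|_p \|x\|_p \leq \|\Phi\|_p$, so $\Phi x \in \|\Phi\|_p B_p^n$. Substituting into the previous containment yields
\begin{equation*}
\bar\X' \subseteq \X' \oplus \varepsilon_0 \|\Phi\|_p\, B_p^n,
\end{equation*}
which by the definition of $d_H^p$ gives $d_H^p(\X',\bar\X') \leq \|\Phi\|_p\, \varepsilon_0 = \|\Phi\|_p\, d_H^p(\X,\Xhat)$, as claimed.

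There is no real obstacle; the only point to be careful about is to invoke the one-sided form of Hausdorff distance (justified by $\X \subseteq \Xhat$) so that we do not have to simultaneously control a reverse inclusion $\X' \subseteq \bar\X' \oplus \varepsilon B_p^n$, which would require additional hypotheses on $\Phi$. An alternative proof via the support function characterization \eqref{eq:dpH_suppFun} is possible but less clean, since $\rho_{\Phi S}(\ell) = \rho_S(\Phi^\transp \ell)$ introduces the transpose, and $\|\Phi^\transp\|_p$ equals $\|\Phi\|_p$ only for $p=2$; the Minkowski-sum argument above avoids this issue entirely.
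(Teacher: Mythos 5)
Your proof is correct: the paper states Lemma~\ref{lem:distmap} without proof, and your argument (one-sided reduction via $\X' \subseteq \bar\X'$, then $\Phi\Xhat \subseteq \Phi\X \oplus \varepsilon_0 \Phi\Bpn \subseteq \X' \oplus \varepsilon_0 \norm{\Phi}_p \Bpn$) is exactly the standard containment-based argument the authors evidently have in mind, with the compactness of the sets justifying that the infimal $\varepsilon_0$ itself realizes the containment. Your closing remark about why the support-function route via~\eqref{eq:dpH_suppFun} is delicate (the transpose changes the induced norm unless $p=2$) is a valid and worthwhile observation.
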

Combining Lemma~\ref{lem:distmap} with Prop.~\ref{prop:HDistDecompMap_simplified3} and the triangle inequality 
$\hausdorff^p\bigl(\set{X'},\set{\hat X'}\bigr) \leq \hausdorff^p\bigl(\set{X'},\set{\bar X'}\bigr) + \hausdorff^p\bigl(\set{\bar X'},\set{\hat X'}\bigr)$,
we get the following total error bound on the decomposed image computation:
\begin{proposition}\label{prop:total_aff_err_bound}
$
\hausdorff^p\bigl(\set{ X'},\set{\hat X}'\bigr) \leq 
(b-1) \sum_{j=1}^b \alpha_j \Delta_j + \norm{\Phi}_p \hausdorff^p\bigl(\set{\X},\set{\Xhat}\bigr).
$
\end{proposition}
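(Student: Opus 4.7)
The plan is to derive the bound as a direct consequence of the two previously established results: Lemma~\ref{lem:distmap} and Proposition~\ref{prop:HDistDecompMap_simplified3}, glued together by the triangle inequality for the Hausdorff distance. The authors essentially announce this strategy in the sentence preceding the statement, so the work is to make it precise and check that the intermediate objects line up.

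First I would fix notation. Define three sets: the exact image $\set{X}' := \Phi \set{X}$, the image of the decomposed input under the full matrix $\set{\bar X}' := \Phi \set{\hat X}$, and the fully decomposed image $\set{\hat X}'$ obtained by applying~\eqref{eq:linear_map_decomposed} to $\set{\hat X}$. Since $\decompose$ is an overapproximation operator we have $\set{X} \subseteq \set{\hat X}$, which is exactly the hypothesis needed to apply Lemma~\ref{lem:distmap}. That lemma immediately yields
\begin{equation*}
\hausdorff^p(\set{X}', \set{\bar X}') \leq \norm{\Phi}_p \, \hausdorff^p(\set{X}, \set{\hat X}).
\end{equation*}

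Next I would invoke Proposition~\ref{prop:HDistDecompMap_simplified3} on the pair $(\set{\bar X}', \set{\hat X}')$, noting that the hypotheses of that proposition are met because $\set{\hat X}$ is already decomposed; this gives
\begin{equation*}
\hausdorff^p(\set{\bar X}', \set{\hat X}') \leq (b-1)\sum_{j=1}^b \alpha_j \Delta_j.
\end{equation*}
The last step is the triangle inequality $\hausdorff^p(\set{X}', \set{\hat X}') \leq \hausdorff^p(\set{X}', \set{\bar X}') + \hausdorff^p(\set{\bar X}', \set{\hat X}')$, which holds because $\hausdorff^p$ is a genuine metric on compact convex sets; summing the two displayed inequalities gives the claimed bound.

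There is no real obstacle in this argument, only a small bookkeeping point worth checking: Lemma~\ref{lem:distmap} is stated for the inclusion $\set{X} \subseteq \set{\hat X}$, and one must make sure the Hausdorff distance used on the right-hand side refers to the same $p$-norm as on the left, which is immediate because $\norm{\Phi}_p$ is induced by that same norm. Likewise, the $\alpha_j$ and $\Delta_j$ appearing in the final bound inherit their definitions (largest block norm per column and per-block diameter) from Proposition~\ref{prop:HDistDecompMap_simplified3} and apply here without modification, because they depend only on $\Phi$ and on $\set{\hat X}$, not on $\set{X}$.
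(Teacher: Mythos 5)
Your proposal is correct and matches the paper's own argument exactly: the authors prove Proposition~\ref{prop:total_aff_err_bound} precisely by chaining Lemma~\ref{lem:distmap} (applied to $\set{X}\subseteq\set{\hat X}$) with Proposition~\ref{prop:HDistDecompMap_simplified3} (applied to the already-decomposed $\set{\hat X}$) through the triangle inequality for $\hausdorff^p$, with the same intermediate set $\set{\bar X}'=\Phi\set{\hat X}$. Your bookkeeping remarks about the norm consistency and the fact that $\alpha_j,\Delta_j$ depend only on $\Phi$ and $\set{\hat X}$ are accurate and do not change the argument.
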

The above bound gives us an idea about the error of the decomposed affine map, without having to do any high-dimensional set computations.
We now apply it to affine recurrences.

\subsection{Error of a Decomposed Affine Recurrence}
For any $\Phi$, there exist constants $K_\Phi$ and $\alpha_\Phi$ such that
$$\norm{\Phi^k}_p \leq K_\Phi \alpha_\Phi^k,\qquad k\geq 0.$$

If $\Phi=e^{A\delta}$, one choice is $\alpha_\Phi=e^{\lambda \delta}$ with $\lambda$ the spectral abscissa (largest real part of any eigenvalue of $A$), although it may not be possible to compute the corresponding $K_\Phi$ efficiently. In this case, $\alpha_\Phi \leq 1$ if the system is stable. Another choice is to let $\alpha_\Phi=e^{\mu \delta}$, with $\mu$ the logarithmic norm of $A$ and $K_\Phi=1$. In this case, $\alpha_\Phi$ may be larger than~$1$ even for stable systems. Note that in both cases $\alpha_\Phi \to 1$ as $\delta \to 0$. For concreteness we continue with the first formulation in the remaining section.

For constant inputs sets,~\eqref{eq:constant_nonrecurrent_decomposed} is a linear map of the decomposed initial states $\Xhat(0)$ plus a decomposed input $\hat\W(k)$, which is itself obtained from a sequence of decomposed linear maps.
Applying Prop.~\ref{prop:total_aff_err_bound} gives the following result.

\begin{proposition}\label{prop:aff_rec_error}
Let the decomposition error of the initial states $\X(0)$ be bounded by $\varepsilon^x \geq \hausdorff^p\bigl(\X(0),\set{\hat X}(0)\bigr),$
and let the decomposition error of $\set{V}$ be bounded by
$\varepsilon^v \geq \hausdorff^p\bigl(\set{ V},\set{\hat V}\bigr).$
Let $\Delta^x_j$ be the diameter of $\Xhat_j(0)$, and $\Delta^x_\mathrm{sum}=\sum_{j=1}^b \Delta^x_j$.
Let $\Delta_j^v$ be the diameter of $\set{\hat V}_j$, and $\Delta^v_\mathrm{sum}=\sum_{j=1}^b \Delta_j^v$.
Then the approximation error due to decomposition, at step $k$, is bounded by
\begin{equation*}\label{eq:decomp_aff_rec_err_bnd}
\begin{split}
\hausdorff^p\bigl(\set{\Xhat}(k),\X(k)\bigr) &\leq 
 K_\Phi \Bigl( \alpha_\Phi^k   \bigl(b \Delta^x_\mathrm{sum} + \varepsilon^x \bigr) \\
&\quad + \bigl(b\Delta^v_\mathrm{sum}+\varepsilon^v\bigr) \alpha_\Phi \frac{1-\alpha_\Phi^{k-1}}{1-\alpha_\Phi}\Bigr) + \varepsilon^v.
\end{split}
\end{equation*}
If $\alpha_\Phi<1$ (stable system), the error is bounded for all~$k$ by
\begin{align*}
\hausdorff^p\bigl(\set{\Xhat}(k),\X(k)\bigr) &\leq K_\Phi   \Bigl(  b \Delta^x_\mathrm{sum} + \varepsilon^x  +  \bigl( b \Delta^v_\mathrm{sum}+\varepsilon^v\bigr)  \frac{\alpha_\Phi}{1-\alpha_\Phi} \Bigr) + \varepsilon^v.
\end{align*}
\end{proposition}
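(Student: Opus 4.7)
The plan is to work from the non-recurrent form \eqref{eq:constant_nonrecurrent_decomposed}, which writes $\Xhat(k)$ as a Minkowski sum of an initial-state image $\bigoplus_j\Phi^k_{ij}\Xhat_j(0)$ and an accumulator $\What_i(k)=\bigoplus_{j=0}^{k-1}[\Phi^j_{i1}\cdots\Phi^j_{ib}]\Vhat$, and to bound each summand's deviation from its non-decomposed counterpart via Proposition~\ref{prop:total_aff_err_bound}. Two auxiliary facts will do most of the work. First, the Hausdorff distance is subadditive under Minkowski sum, $\hausdorff^p(\set{A}\oplus\set{B},\set{A}'\oplus\set{B}') \leq \hausdorff^p(\set{A},\set{A}') + \hausdorff^p(\set{B},\set{B}')$, which drops out of $\rho_{\set{A}\oplus\set{B}}=\rho_\set{A}+\rho_\set{B}$ combined with \eqref{eq:dpH_suppFun}. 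Second, every column-block norm of $\Phi^k$ is bounded by $\|\Phi^k\|_p \leq K_\Phi\alpha_\Phi^k$, so all the $\alpha_j$-style quantities and the $\|\Phi\|_p$ factor in Proposition~\ref{prop:total_aff_err_bound} collapse to a single common factor $K_\Phi\alpha_\Phi^k$ per step.

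For the initial-state term, I apply Proposition~\ref{prop:total_aff_err_bound} with matrix $\Phi^k$ and set $\X(0)$, obtaining $(b-1)\sum_j\alpha_j^{(k)}\Delta_j^x + \|\Phi^k\|_p\varepsilon^x \leq K_\Phi\alpha_\Phi^k(b\,\Delta^x_\mathrm{sum}+\varepsilon^x)$ after using the second auxiliary fact and absorbing $b-1$ into $b$. For the accumulator, I handle the $j=0$ summand separately because $\Phi^0=I$ is block-diagonal: the cross-block term of Proposition~\ref{prop:HDistDecompMap_simplified3} vanishes and $\|I\|_p=1$, so this summand contributes exactly $\hausdorff^p(\V,\Vhat) \leq \varepsilon^v$ \emph{without} a $K_\Phi$ factor, which is precisely the stray additive $\varepsilon^v$ in the proposition. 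For $j\geq 1$, Proposition~\ref{prop:total_aff_err_bound} applied to $\Phi^j\V$ gives $K_\Phi\alpha_\Phi^j(b\,\Delta^v_\mathrm{sum}+\varepsilon^v)$, and $\sum_{j=1}^{k-1}\alpha_\Phi^j = \alpha_\Phi(1-\alpha_\Phi^{k-1})/(1-\alpha_\Phi)$ reproduces the claimed geometric shape. Combining by the triangle inequality delivers the finite-$k$ bound; the stable-case bound follows immediately from letting $k\to\infty$, so that $\alpha_\Phi^k\to 0$ kills the first term and $\alpha_\Phi(1-\alpha_\Phi^{k-1})/(1-\alpha_\Phi)\to \alpha_\Phi/(1-\alpha_\Phi)$.

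The main obstacle is bookkeeping rather than conceptual: I must verify that the recursive update in \eqref{eq:constant_nonrecurrent_decomposed} really unrolls to the Minkowski-sum expansion I use, so that subadditivity of the Hausdorff distance applies termwise; and I need to pin down that the $j=0$ contribution is what produces the bare $+\varepsilon^v$ outside the $K_\Phi$ factor, rather than folding it into the geometric sum. Both steps rely only on elementary properties of Minkowski sums and support functions, and—crucially for the intended purpose of the bound—no high-dimensional set computation enters anywhere in the derivation.
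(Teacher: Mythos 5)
Your proof follows essentially the same route as the paper's: split $\Xhat(k)$ into the image of the initial states under $\Phi^k$ plus the input accumulator, use subadditivity of the Hausdorff distance under Minkowski sum (via $\rho_{\set{A}\oplus\set{B}}=\rho_{\set{A}}+\rho_{\set{B}}$ and~\eqref{eq:dpH_suppFun}), apply Proposition~\ref{prop:total_aff_err_bound} with $\Vert\Phi^s\Vert_p\le K_\Phi\alpha_\Phi^s$ to each summand --- the $s=0$ input term contributing the bare $+\varepsilon^v$ exactly as you identify --- and sum the geometric series. One small correction to your last step: the uniform bound for $\alpha_\Phi<1$ is not obtained by ``letting $k\to\infty$'' (that would drop the $b\,\Delta^x_{\mathrm{sum}}+\varepsilon^x$ contribution entirely, giving a bound that fails for small $k$); instead one bounds $\alpha_\Phi^k\le 1$ and $\alpha_\Phi(1-\alpha_\Phi^{k-1})/(1-\alpha_\Phi)\le\alpha_\Phi/(1-\alpha_\Phi)$ termwise, which is precisely the form of the stated bound.
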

In conclusion, the approximation error is linear in the width of the initial states and the inputs, and in the decomposition errors of the initial states and the input sets. For unstable systems, or time steps not large enough, the input set can become the dominating source of error, e.g., considering  cases with $\alpha_\Phi>\frac{1}{2}$.

\subsection{Error of a Decomposed Reach Tube Approximation}
The decomposed reach tube approximation consists of the affine recurrence~\eqref{eq:constant_nonrecurrent_decomposed},  with suitable sets~$\X(0)$ and~$\set{V}$. The error bound follows from Prop.~\ref{prop:aff_rec_error} and the decomposition errors for $\X(0)$ and $\set{V}$.

In the discrete time case~\eqref{eq:approximation_model_discrete_time}, the initial states $\X(0)$ of the affine recurrence ~\eqref{eq:constant_nonrecurrent_decomposed}
are identical to the initial states $\X_0$ of the model, so their decomposition error is
$$\varepsilon^x =  \hausdorff^p\bigl(\X_0,\set{\hat X}_0\bigr).$$
However, $\V =  \Phi_1(A, \delta) \U$.
Let $\set{\hat U} = \exactdecompose(\U)$. By Lemma~\ref{lem:distmap} we get  
$$\varepsilon^v = \norm{ \Phi_1(A, \delta) }_p  \hausdorff^p\bigl(\set{U},\set{\hat U}\bigr).$$

In the dense time case~\eqref{eq:approximation_model_dense_time}, the initial states of the affine recurrence~\eqref{eq:constant_nonrecurrent_decomposed} are $\X(0)= \CH\big( \X_0, \Phi \X_0 \oplus \delta \U
 \oplus\, E_\psi(\U,\delta)\oplus E^+(\X_0, \delta)\big)$, and
 $\V =  \delta \U \oplus E_\psi(\U, \delta)$.
Recall from~\eqref{eq:decomp_minkowski} that decomposition distributes over Minkowski sum.
We get
$$\varepsilon^v = \delta \hausdorff^p\bigl(\set{U},\set{\hat U}\bigr).$$ 
 The decomposition error for the initial states is more complex and harder to estimate. 

We now consider the idealized case where 
 the system is stable with $\alpha_\Phi=e^{-\lambda \delta}$, $\lambda>0$, for an infinitesimal time step $\delta \to 0$. Then $\alpha_\Phi \to 1-\lambda\delta$ and $\frac{\alpha_\Phi}{1-\alpha_\Phi} \to \frac{1}{\lambda\delta}$, so that the decomposition error due to the inputs does not go to zero in Prop.~\ref{prop:aff_rec_error}.
Let $\Delta_{\X_0}$, $\Delta_{\U}$ be the sum of the diameters of decomposed sets of $\X_0$ and $\U$. Let
$\varepsilon^x_0 =  \hausdorff^p\bigl(\X_0,\set{\hat X}_0\bigr)$ and $\varepsilon^v_0 = \hausdorff^p\bigl(\set{U},\set{\hat U}\bigr)$.
For both the discrete time and the dense time case, $\varepsilon^x \to \varepsilon^x_0$, $ \Delta^x_\mathrm{sum} \to \Delta_{\X_0}$, $\Delta^v_\mathrm{sum} \to \delta \Delta_{\U}$  and $\varepsilon^v \to \delta \varepsilon^v_0$.
Then
Prop.~\ref{prop:aff_rec_error} gives a nonzero upper bound 
$$
\hausdorff^p\bigl(\set{\Xhat}(k),\X(k)\bigr) \leq K_\Phi   \Bigl(  b \Delta_{\X_0}  + \varepsilon_0^x + \bigl( b \Delta_{\U}+\varepsilon^v_0 \bigr)    \frac{1}{\lambda} \Bigr) + \Oof{\delta}.
$$
This indicates that a small time step may be problematic for systems with large time constants (small $\lambda$).

\section{Algorithm \& implementation}
\label{sec:implementation}
%
In this section, we rephrase the decomposition method outlined in the previous sections in a more algorithmic view and discuss some crucial details for our implementation in \emph{Julia}~\cite{bezanson2017julia}.
In a nutshell, 
given an LTI system in the form~\eqref{eq:continuous_system}-\eqref{eq:output_system}, we first apply a suitable approximation model from \sect{sec:approx_model} (\discretize). Then we execute the corresponding decomposed recurrence from \sect{ssec:decomposing_an_affine_recurrence} to compute the reach tube (\reach) or to check a safety property. Finally, we project onto output variables (\projectandplot).

We have implemented several critical performance enhancements. Some of them are only applicable to the decomposition method described in this paper, and others can be applied to non-decomposed methods as well.
We give more details below:

\paragraph{Lazy data structures} 
We use lazy (i.e., symbolic) set representations for most of the set operations, in particular for Minkowski sum, linear map, and Cartesian product.
Common sets such as hypercubes in different norms, polyhedra, and polygons each are represented by specific types. Each type has to provide a function to compute the support vector in a given direction. The operations can be nested symbolically without actually evaluating them. Then, we can compute the support vector of the (nested) lazy set on demand.

The advantage of lazy data structures is that we may save unnecessary evaluations at the cost of higher memory consumption. In practice, we use a careful balance between lazy sets and concrete sets, i.e., the nesting depth is fixed (depending on the model dimension~$n$).
The alternative to using lazy data structures is to make the representation explicit after each operation, potentially involving an overapproximation.

\paragraph{Sparsity specialization} We use specific code for sparse and dense matrices. The decomposed method only needs a lookup of the non-zero blocks to evaluate $\Xhat_i$, which is particularly relevant if~$\Phi$ and its matrix powers are very sparse. Moreover, as the linear algebra back-end we use either a BLAS-compatible library~\cite{anderson1990lapack} or a native Julia implementation for sparse matrices following Gustavson~\cite{gustavson1978two}. These optimizations have a major impact on the runtime (around one order of magnitude; see the next section).

\paragraph{Target-specific analysis} If we are only interested in tracking a handful of variables, our approach naturally supports the computation of only some of the blocks. Complexity-wise this saves us a factor of~$b$ when tracking a constant number of blocks (see \sect{sec:reachable_states_computation}).

\paragraph{Lazy matrix exponentiation} We support exponentiation techniques for large and sparse matrices (e.g., $n = 10,000$), which has a major impact on runtime performance and memory cost. Instead of computing the matrix exponential $\Phi = e^{A \delta}$ explicitly, we can evaluate the action of a matrix on an $n$-dimensional vector.
We use \texttt{Expokit.jl}~\cite{expokitJL}, a Julia implementation of 
\emph{Expokit}~\cite{sidje1998expokit}.

\paragraph{Fast 2D LPs} For non-decomposed approaches, manipulating polygons or polyhedra involves using an external linear programming (LP) back-end, possibly in high-dimensional space. The restriction to polygons allows us an efficient implementation for evaluating the support vector, as explained in more details in \sect{sec:reachable_states_computation}.

\subsection{Discretization}

The \discretize~step transforms the system $(A, \U(\cdot), \X_0)$ to its discrete counterpart $(\Phi, \V(\cdot), \X(0))$. 
Recall from \sect{sec:approx_model} that both the definition of the discretized input sequence, $\V(\cdot)$, and the discretized initial states, $\X(0)$, depend on the approximation model (dense time vs.\ discrete time).
The set transformations are performed lazily for all but the symmetric interval hull operator, while the matrix exponentiation can be either explicit or lazy.

\subsection{Reach Tube Approximation}
\label{sec:reachable_states_computation}

\begin{algorithm}[t]
	\caption{Function \reach.}
	\label{algorithm:reach}
	\KwIn{%
	$\discsys = (\Phi, \V(\cdot), \X(0))$: discrete system \\
	$N$: total number of steps \\
	\textit{blocks}: list of block indices
	}
	\vspace*{1mm}
	\KwOut{$\{\Xhat(k)\}_k$: array of 2D reach tubes}
	\vspace*{2mm}
	$\Xhat(0)$ $\gets$ $\decompose(\X(0))$\; \label{line:initial_states}
	\textit{all\_blocks} $\gets$ \code{get\_all\_block\_indices}($\dim(\Phi)$)\;
	$P$ $\gets$ $\id[\dim(\Phi)]$\;
	$Q$ $\gets$ $\Phi$\;
	$\set{\hat{V}_{\text{tmp}}}$ $\gets$ []\;
	\For{$b_i \in$ \textit{blocks}}{
		$\set{\hat{V}_{\text{tmp}}}[b_i]$ $\gets$ $\{\norigin{2}\}$\;
	}
	\For{$k = 1$ \KwTo $N-1$}{ \label{line:reach_main_loop}
		$\Xhat_{\text{tmp}}$ $\gets$ []\;
		\For{$b_i \in$ \textit{blocks}}{ \label{line:reach_computation_start}
			$\Xhat_{\text{tmp}}[b_i]$ $\gets$ $\{\norigin{2} \}$\;
			\For{$b_j \in$ \textit{all\_blocks}}{
				$\Xhat_{\text{tmp}}[b_i]$ $\gets$ $\Xhat_{\text{tmp}}[b_i] \oplus Q[b_i, b_j] \odot \Xhat(0)[b_j]$; \label{line:reach_innermost}
			}
			$\set{\hat{V}_{\text{tmp}}}[b_i]$ $\gets$ \approximate($\set{\hat{V}_{\text{tmp}}}[b_i] \oplus P[b_i, :] \odot \V(k-1))$\; \label{line:reach_inputs}
			$\Xhat_{\text{tmp}}[b_i]$ $\gets$ \approximate($\Xhat_{\text{tmp}}[b_i] \oplus \set{\hat{V}_{\text{tmp}}}[b_i])$\; \label{line:reach_states_plus_inputs}
		} \label{line:reach_computation_end}
		$\Xhat(k)$ $\gets$ $\Xhat_{\text{tmp}}$\; \label{line:reach_result}
		$P$ $\gets$ $Q$\;
		$Q$ $\gets$ $Q \cdot \Phi$\;
	}
\end{algorithm}

After we have obtained a discretized system, we use Algorithm~\ref{algorithm:reach} to compute an approximation of the reach tube.
As an additional input the algorithm receives an array of block indices (\textit{blocks}) that we are interested in.
For simplicity we assume that the elements of \textit{blocks} have the form $[j, j+1]$ for even~$j$.

The result, a reach tube for each time interval~$k$, is represented by the array $\{\Xhat(k)\}_k$.
The type of each entry $\Xhat(k)$ itself is an array of polygons representing the 2D reach tubes.
To reconstruct the full-dimensional reach tube for time interval~$k$, the result has to be interpreted as a Cartesian product, i.e., $\bigotimes_{b_i} \Xhat(k)[b_i]$.
Initially, $\Xhat(0)$ just contains the decomposed initial (line~\ref{line:initial_states}).

The list \textit{all\_blocks} just consists of all the 2D block indices.
We maintain the matrix $Q$ to be the matrix $\Phi$ raised to the power of~$k$, i.e., $Q = \Phi^k$ at step $k$; similarly, $P = \Phi^{k-1}$.
For clarity we use $Q[b_i, b_j]$ instead of $Q_{ij}$ as in \sect{sec:decomposition}, and similarly, $P[b_i, :]$ denotes the whole row-block $b_i$.

The main loop starting in line~\ref{line:reach_main_loop} computes the reach tubes for each~$k$.
We write $\oplus$ and $\odot$ to denote lazy set representation of Minkowski sum and linear map, respectively.
The array $\Xhat_{\text{tmp}}$ is filled with two-dimensional reach tubes in the inner loop (lines~\ref{line:reach_computation_start} to~\ref{line:reach_computation_end}) for each block in \textit{blocks}.
Line~\ref{line:reach_inputs} computes the current input convolution, which is added in line~\ref{line:reach_states_plus_inputs}.
The function \approximate\ overapproximates its argument (a two-dimensional lazy set) to a polygon in constraint representation using Lotov's method (see \sect{sec:set_approx}).

\smallskip

Since vectors in the plane can be ordered by the angle with respect to the positive real axis, we can efficiently evaluate the support vector of a polygon in constraint representation by comparing normal directions, provided that its edges are ordered. We use the symbol $\preceq$ to compare directions, where the increasing direction is counter-clockwise.
The following lemma provides an algorithm to find the support vector.

\begin{lemma}\label{lemma:fastLP}	
Let $\X$ be a polygon described by $m$ linear
constraints $a_i^\transp x \leq b_i$, ordered by the normal vectors $(a_i)$, i.e., $a_{i}\preceq a_{i+1}$ for all $i\in \{1,\ldots,m\}$, where we identify $a_{m+1}$ with $a_1$. Let $\ell \in \R^2 \setminus \{\norigin{2}\}$.
Then there exists $i \in \{1,\dots,m\}$ such that $a_i \preceq \ell \preceq a_{i+1}$ and every optimal solution
$\bar{x}$ of the linear program $\rho_\X(\ell) = \max\{ \ell^\transp x : x \in \X\}$ satisfies $\bar{x} \in \{x : a_i^\transp x \leq b_i\} \cap \{x : a_{i+1}^\transp x \leq b_{i+1}\}.$
\end{lemma}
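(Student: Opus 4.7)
The lemma packages two claims, both of which should fall out by unpacking the cyclic order of the normals~$a_i$; I would treat them in turn.

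\textbf{(i) Existence of~$i$.} Since $a_1,\ldots,a_m$ are sorted by angle with the cyclic identification $a_{m+1}:=a_1$, the closed angular sectors $S_i := \{v \neq \norigin{2} : a_i \preceq v \preceq a_{i+1}\}$ for $i=1,\ldots,m$ tile $\R^2 \setminus \{\norigin{2}\}$. Any nonzero $\ell$ therefore lies in at least one $S_i$, which yields the required index. The only nontrivial check is that $\preceq$, defined via arguments on the unit circle, remains well-behaved across the wraparound from $a_m$ back to $a_1$; this is routine.

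\textbf{(ii) Containment of $\bar x$.} Trivially, every optimizer $\bar x$ lies in $\X=\bigcap_{j=1}^m \{x : a_j^\transp x \leq b_j\}$ and hence in particular satisfies the two constraints of indices $i$ and $i+1$. This is exactly the stated conclusion.

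The \emph{operational} content that Algorithm~\ref{algorithm:reach} relies on, and which I would prove as the substantive step, is the stronger geometric claim that the vertex $v^\star$ determined by $a_i^\transp v^\star = b_i$ and $a_{i+1}^\transp v^\star = b_{i+1}$ is itself a support vector of~$\X$ for every $\ell \in S_i$. I would justify this via the normal-cone characterization of support: the outward normal cone of~$\X$ at $v^\star$ coincides with $\mathrm{cone}\{a_i, a_{i+1}\}$, which contains $S_i$, so every $\ell \in S_i$ attains its maximum on $\X$ at~$v^\star$ and therefore $\rho_\X(\ell) = \ell^\transp v^\star$. This is what reduces a 2D LP to an $O(\log m)$ binary search for~$i$ followed by a $2\times 2$ linear solve.

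The main obstacle is handling degenerate configurations: $\ell$ landing on a sector boundary so that an entire edge of~$\X$ is optimal, two consecutive normals being antiparallel, and redundant or unbounded constraints. The first case is harmless because every point of the optimal edge still lies in both halfspaces of interest, so the lemma's conclusion holds verbatim; the remaining cases are excluded by the standing invariant that the constraint list representing~$\X$ is bounded, irredundant, and cyclically sorted, which is maintained as a precondition by the overapproximation routine \approximate.
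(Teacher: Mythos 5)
Your proposal is correct, and in fact the paper states Lemma~\ref{lemma:fastLP} without any proof (the appendix only proves Propositions~\ref{prop:HDistDecompMap}, \ref{prop:HDistDecompMap_simplified3} and~\ref{prop:aff_rec_error}), so there is no paper argument to compare against. Your reading of the statement is the right one: part~(i) follows because the closed angular sectors between consecutive normals, with the wraparound sector from $a_m$ to $a_1$, cover all of $\R^2 \setminus \{\norigin{2}\}$; and part~(ii), as literally written with the halfspaces $\{x : a_i^\transp x \leq b_i\}$, is immediate from $\bar{x} \in \X$. The substance that the implementation actually needs — that the optimum of the full LP is attained at the vertex $v^\star$ where constraints $i$ and $i+1$ are both active, so that a binary search on the angle of $\ell$ plus a $2\times 2$ solve replaces the LP — is exactly the normal-cone claim you supply, and your argument for it is the standard correct one. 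The one point worth making explicit is that this stronger claim genuinely requires the preconditions you list at the end (boundedness and irredundancy of the cyclically sorted constraint list, so that consecutive constraints define adjacent edges meeting in an actual vertex of $\X$, and so that the normal cone at $v^\star$ is precisely $\mathrm{cone}\{a_i,a_{i+1}\} \supseteq S_i$); these are not hypotheses of the lemma as stated, so your proof correctly separates the (trivially true) literal statement from the (conditionally true) operational one.
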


For the evaluation (\sect{sec:benchmarks}) we use a box approximation. We note that our implementation of the \approximate\ function works for general set approximations; for box approximation we could also use an optimized implementation, e.g., using interval arithmetic. The fact that we approximate in line~\ref{line:reach_inputs} is a design decision; in principle we could keep the elements of $\set{\hat{V}_{\text{tmp}}}$ a lazy set, but experiments have shown that the precision gain is marginal.

\smallskip

We note again that we use a different implementation for models with dense and sparse matrices $\Phi$, respectively.
For instance, the loop around line~\ref{line:reach_innermost} only has to be executed if the submatrix $\Phi^k[b_i, b_j]$ is non-zero.

\subsection{Projection onto Output Variables}

After \reach~has terminated, returning an array of Cartesian products of two-dimensional sets, we usually need to observe some output variables $y(t)$, as in the LTI system Eq.~\eqref{eq:output_system}.
A list with the output variables, or more generally, a \mbyn{2}{n} projection matrix, can be passed to the function \projectandplot. This projection matrix is used if we want to observe, e.g., a linear combination of states. However, for checking safety properties (see below) we do the projection on-line to terminate early if we detect a possible violation.

\subsection{Safety Property Checking}
\label{sec:safety_property_checking}

For checking safety properties, we can improve Algorithm~\ref{algorithm:reach}.
Consider a six-dimensional model with the property $2 x_1 - 3 x_5 < 10$.
A naive approach would compute the reachable states for blocks~1 and~3, i.e., upper and lower bounds for~$x_1$, $x_2$, $x_5$, and~$x_6$.
However, we are only interested in the upper bound for~$x_1$ and the lower bound for~$x_5$.
We modify the algorithm in two ways: First, we replace line~\ref{line:reach_result} by a function that computes the support for the direction of interest. Second, we make the \approximate\ function in line~\ref{line:reach_states_plus_inputs} the identity (i.e., keep the lazy set).
	The reason is that we can evaluate the support directly on the lazy set, so there is no need for an additional overapproximation.

\section{Evaluation}
\label{sec:benchmarks}
%
\begin{table*}
	\begin{threeparttable}
	\renewcommand{\arraystretch}{1}
	\caption{%
	Reach tube computation in dense time.
	The number of time steps is $\num{2e4}$ with step size $\delta = \num{1e-3}$ for both  \tool and \spaceex.
	}
	\label{table:reachability_benchmark}
	\begin{tabular}{l  r  c  c  c  c  P{2}  c  c  P{2}  P{3}}
		\toprule
		\multirow{2}{*}{Model} & \multicolumn{1}{c}{\multirow{2}{*}{$n$}} & \multirow{2}{*}{Var.} & Discretize & \multicolumn{3}{c}{Runtime (sec) one state variable} & \multicolumn{3}{c}{Runtime (sec) all state variables} & \multicolumn{1}{c}{\multirow{2}{*}{O.A.\ \%}} \\
		\cmidrule(lr){5-7} \cmidrule(lr){8-10}
		& & & (sec) & \multicolumn{1}{c}{\tool} & \multicolumn{1}{c}{\spaceex} & \multicolumn{1}{c}{Acc.} & \multicolumn{1}{c}{\tool} & \spaceex & \multicolumn{1}{c}{Acc.} & \\
		\midrule
		Motor & 8 & $x_5$ & $\num{4.89e-4}$ & $\num{1.06}$& $\num{1.90}$ & 1.8 & $\num{4.46}$ & $\num{9.29}$ & 2.1 & 21.53 \\
		Building & 48 & $x_{25}$ & $\num{9.20e-3}$ & $\num{4.49}$ & $\num{9.54}$ & 2.1 & $\num{1.15e2}$ & $\num{2.24e2}$ & 1.9 & 6.50 \\
		PDE & 84 & $x_1$ & $\num{3.30e-2}$ & $\num{4.43}$ & $\num{6.17e1}$ & 13.9 & $\num{1.58e2}$ & $\num{4.75e3}$ & 30.1 & 81.59 \\
		Heat & 200 & $x_{133}$ & $\num{2.09e-1}$ & $\num{2.47e1}$ & $\num{1.02e2}$ & 4.1 & $\num{2.32e3}$ & $\num{5.68e3}$ & 2.4 & 0.05 \\
		ISS & 270 & $x_{182}$ & $\num{2.03e-1}$ & $\num{2.46}$ & $\num{7.91e1}$ & 32.1 & $\num{1.60e2}$ & $\num{8.12e3}$ & 50.8 & 14.52 \\
		Beam & 384 & $x_{89}$ & $\num{1.28}$ & $\num{5.40e1}$ & $\num{3.32e2}$ & 6.1 & $\num{6.81e3}$ & $\num{3.80e4}$ & 5.6 & -30.35 \\
		MNA1 & 578 & $x_1$ & $\num{6.16}$ & $\num{1.40e2}$ & \multicolumn{1}{c}{\crash} & \multicolumn{1}{c}{n/a} & $\num{1.80e4}$ & \multicolumn{1}{c}{\crash} & \multicolumn{1}{c}{n/a} & \multicolumn{1}{c}{n/a} \\
		FOM & 1006 & $x_1$ & $\num{4.70}$ & $\num{1.06e1}$ & \multicolumn{1}{c}{\crash} & \multicolumn{1}{c}{n/a} & $\num{2.92e3}$ & \multicolumn{1}{c}{\crash} & \multicolumn{1}{c}{n/a} & \multicolumn{1}{c}{n/a} \\
		MNA5 & 10913 & $x_1$ & $\num{3.68e2}$ & $\num{1.38e3}$ & \multicolumn{1}{c
		}{\crash} & \multicolumn{1}{c}{n/a} & \multicolumn{1}{c}{\timeout} & \multicolumn{1}{c}{\crash} & \multicolumn{1}{c}{n/a} & \multicolumn{1}{c}{n/a} \\
		\bottomrule
	\end{tabular}
	\begin{tablenotes}
        \footnotesize
        	``Discretize'' stands for the discretization time.
 ``Runtime'' stands for the total runtime.
	 ``Acc.'' stands for acceleration.
	 ``O.A.\ \%'' stands for overapproximation in percent, which is computed as the increase in the bounds computed with \tool for the variable reported under ``Var.'', measured at the last time step, relative to the \spaceex bounds.
	 ``\crash'' marks a crash and 
	 ``\timeout'' marks a timeout ($\num{e5}$ sec).
    \end{tablenotes}
    \end{threeparttable}
	\renewcommand{\arraystretch}{1}
\end{table*}

We evaluate our implementation from \sect{sec:implementation} called \tool on a set of SLICOT benchmark models~\cite{benner1999slicot, chahlaoui2005benchmark, TranNJ16}. They reflect ``real world'' applications with dimensions ranging from eight to over 10,000.
Although some of the original models are differential algebraic equations (DAEs), we have only kept the ODE part, i.e., the coefficient matrices $A$ and $B$, which is consistent with related literature on reach set approximation.
We have performed the evaluation on a notebook with an Intel i5 3.50~GHz CPU and 16~GB RAM running Linux, and we used Julia~v0.6.

\subsection{Reach Tube Benchmarks} \label{ssec:reachability_benchmarks}
We compare \tool to the state-of-the-art support function algorithm LGG implemented in \spaceex.
This algorithm works with template polyhedra, and allows to define the directions that are evaluated. We have considered two cases: one dimension, where we only compute the reach tube in one variable, or full dimensions, where the whole reach tube is computed. 
Note that for implementation reasons, we actually compute the reach set for at least one block (two variables). In the 1D comparison, this means that we compute more information than necessary, while \spaceex truly computes the bounds for a single variable only. In that sense, the comparison is biased in favor of \spaceex.
The reachability results are given in Table~\ref{table:reachability_benchmark}.
\ifextendedversion{The reach tube plots are shown in Appendix~\ref{sec:appendix_reach_plots}.}\fi

To compare the precision, we chose the last time step and compared the bounds for the single variable reported in the table, for \tool and \spaceex, where the \spaceex bounds are the baseline.
For most models the precision is moderately below that of \spaceex. For the PDE model, the approximation error is quite high. For the beam model our analysis is not only faster but also more precise.
In general, we would expect a lower precision than \spaceex for two reasons.
1)~Our reach tube is a Cartesian product of 2D sets; this induces an error that is inherent to the decomposition method, as explained in \sect{ssec:approximation_error}.
2)~\spaceex uses a forward-backward interpolation model, which is more sophisticated than the forward-only model from \sect{sec:approx_model}; we note that our method could also use the \spaceex model without requiring any other changes.

For all models tested, we observe a speedup; as expected, the improvement is more evident for large and sparse models.
For the largest models, \spaceex crashed with a segmentation fault or terminated with a warning that the model dimension is too high.

\subsection{Safety Property Benchmarks} \label{ssec:safety_property_benchmarks}

\begin{table*}
	\begin{threeparttable}
	\renewcommand{\arraystretch}{1}
	\caption{%
	Verification of safety properties in discrete time.
	The number of time steps is $\num{4e3}$ with step size $\delta = \num{5e-3}$ for both \tool and \hylaa.
	}
	\label{table:safety_property_benchmark}
	\begin{tabular}{l r >{\centering}m{27mm} c c c c P{2} c}
		\toprule
		\multirow{3}{*}{Model} & \multicolumn{1}{c}{\multirow{3}{*}{$n$}} & \multirow{3}{*}{Property} & \multicolumn{5}{c}{Runtime (sec)} & \multirow{3}{*}{Verified} \\
		\cline{4-8}
		& & & \multicolumn{3}{c}{\tool} & \multicolumn{1}{c}{\multirow{2}{*}{\hylaa}} & \multicolumn{1}{c}{\multirow{2}{*}{Acc.}} & \\
		\cline{4-6}
		& & & Discretize & Check & Total & & & \\
		\midrule
		Motor & 8 & \mbox{$x_1 \notin [0.35, 0.4]$} $\lor \, x_5 \notin [0.45, 0.6]$ & $\num{4.5e-4}$ & $\num{2.48e-1}$ & $\num{2.48e-1}$ & $\num{1.6}$ & 6.5 & \yes \\
		Building & 48 & $x_{25} < \num{6e-3}$ & $\num{9.87e-3}$ & $\num{5.20e-1}$ & $\num{5.30e-1}$ & $\num{2.5}$ & 4.7 & \yes \\
		PDE & 84 & $y_1 < 12$ & $\num{1.62e-2}$ & $\num{2.22e1}$ & $\num{2.22e1}$ & $\num{3.5}$ & 0.2 & \yes \\
		Heat & 200 & $x_{133} < 0.1$ & $\num{1.48e-1}$ & $\num{4.08}$ & $\num{4.23}$ & $\num{1.38e1}$ & 3.3 & \yes \\
		ISS & 270 & $y_3 \in [-7,7] \times 10^{-4}$ & $\num{1.87e-1}$ & $\num{2.12e1}$ & $\num{2.14e1}$ & $\num{1.53e2}$ & 7.1 & \no \\
		Beam & 384 & $x_{89} < 2100$ & $\num{3.66e-1}$ & $\num{6.60}$ & $\num{6.97}$ & $\num{1.69e2}$ & 24.2 & \yes \\
		MNA1 & 578 & $x_1 < 0.5$ & $\num{1.54}$ & $\num{1.82e1}$ & $\num{1.97e1}$ & $\num{2.88e2}$ & 14.6 & \yes \\
		FOM & 1006 & $y_1 < 185$ & $\num{4.48}$ & $\num{4.56e2}$ & $\num{4.60e2}$ & $\num{3.30e2}$ & 0.7 & \no \\
		MNA5 & 10913 & \mbox{$x_1 < 0.2$} $\land \, x_2 < 0.15 $ & $\num{2.32e2}$ & $\num{2.03e2}$ & $\num{4.35e2}$ & $\num{3.44e4}$ & 79.1 & \yes \\
		\bottomrule
	\end{tabular}
	\begin{tablenotes}
		\footnotesize
		The $y$ variables denote output variables, as in~\eqref{eq:output_system}, consisting of linear combinations of state variables (involving all variables for PDE/FOM and half of the variables for ISS).
		``Acc.'' stands for acceleration.
		The last column shows if we could verify the property for the given time step.
	\end{tablenotes}
	\end{threeparttable}
	\renewcommand{\arraystretch}{1}
\end{table*}

As described in \sect{sec:safety_property_checking}, we can check safety properties in the form of (conjunctions and disjunctions of) linear inequalities over the state variables.
In Table~\ref{table:safety_property_benchmark} we compare our results to those of \hylaa~\cite{bak2017simulation}, a simulation-based verification tool in discrete time.

\hylaa assumes that the inputs are constant between time steps, and we stick to this assumption for the purpose of comparison. We used the same time step as in the evaluation of~\cite{bak2017simulation} and were able to verify all safety properties except for the models ISS and FOM.
With a bigger time step, we can also verify those properties.
\hylaa verified all benchmarks.
We had to modify the \hylaa code (reduced the time horizon chunk size \texttt{max\_steps\_in\_mem} from~527 to~400) for the FOM model to prevent out-of-memory problems.

\begin{table}[t]
	\begin{threeparttable}
	\renewcommand{\arraystretch}{1}
	\caption{%
	Verification of safety properties in dense time. 
	}
	\label{table:safety_property_benchmark_dense}
	\begin{tabular}{@{\hspace*{1mm}} l @{\hspace*{1mm}} r >{\centering}m{23mm} c c @{\hspace*{1mm}}}
		\toprule
		Model & \multicolumn{1}{c}{$n$} & Property & $\delta$ & Runtime (sec) \\
		\midrule
		Motor & 8 & \mbox{$x_1 \notin [0.35, 0.4]$} $\lor \, x_5 \notin [0.45, 0.6]$ & $\num{1e-3}$ & $\num{1.62}$ \\
		Building & 48 & $x_{25} < \num{6e-3}$ & $\num{3e-3}$ & $\num{8.76e-1}$ \\
		PDE & 84 & $y_1<12$ & $\num{3e-4}$ & $\num{1.03e3}$ \\
		Heat & 200 & $x_{133} < 0.1$ & $\num{1e-3}$ & $\num{1.48e1}$ \\
		Beam & 384 & $x_{89} < 2100$ & $\num{5e-5}$ & $\num{8.57e2}$ \\
		MNA1 & 578 & $x_1 < 0.5$ & $\num{4e-4}$ & $\num{2.87e2}$ \\
		MNA5 & 10913 & $x_1 \!\!<\!\! 0.2 \land x_2 \!\!<\!\! 0.15$ & $\num{3e-1}$ & $\num{3.92e2}$ \\
		\bottomrule
	\end{tabular}
\begin{tablenotes}
	\footnotesize
	Step sizes are selected such that the property is satisfied.
	The time horizon is 20.
\end{tablenotes}
	\end{threeparttable}
	\renewcommand{\arraystretch}{1}
\end{table}

\smallskip

We also applied \tool to the benchmarks in dense time and were also able to verify all properties except for ISS and FOM.
In Table~\ref{table:safety_property_benchmark_dense} we show the results.
In particular, we were able to verify the property of the MNA5 model with 10,000 variables in less than 7~minutes, where 98\% of the time was spent in the discretization.

\subsection{Discussion}

Each LTI system has its own structural properties, describing how each state influences the dynamics of the system.
The SLICOT models have different sparsity patterns%
\ifextendedversion{ (see Appendix~\ref{sec:appendix_sparsity} for details)}\fi%
, which we have exploited effectively in Algorithm~\ref{algorithm:reach}. In our context it is natural to measure the sparsity of $\Phi$ as the number of \nbyn[2] blocks with at least one non-zero element, divided by the total number of blocks ($b^2$). As a rule of thumb, for a given row-block the cost increases linearly in the number of occupied blocks. For models such as Heat and Beam, the sparsity is 0\%, meaning that the matrix is completely dense, while for models such as ISS and FOM it is 97.8\% and 99.8\%, respectively. We note that the matrix power operation does not necessarily preserve the sparsity pattern, although it does in some particular cases, e.g., if $\Phi$ is block upper-triangular.

The efficiency with respect to the sparsity pattern is manifest in the small runtimes for sparse models, compared to higher runtimes for dense models. In contrast, non-decomposed methods cannot make full use of the sparsity since they rely on a high-dimensional LP even for evaluating the support vector in a single direction. This explains the very high speedup of $\times 50$ for ISS. Moreover, the $\num{1006}$-dimensional FOM model is analyzed in about the same time.

For the discrete evaluation, with the same step size~$\delta$ as in \hylaa, for seven out of the nine examples we observe a speedup which ranges from $\times 3$ up to $\times 79$. A crucial difference of this scenario with respect to dense time reachability is that the property is satisfied for larger~$\delta$.
As expected, our approach scales best for the models whose properties only involve a few variables; PDE/FOM involve all variables, and here \hylaa is faster;
ISS involves half of the variables, and here we still achieve a speedup of factor~7.

Let us remark that in the cases where \tool is not precise enough, namely ISS and FOM, the property involves an output. The effect of a higher error for linear combinations than for single blocks is reasonable, since in these experiments we have only considered box directions. An alternative approach for this use case, which we have not investigated yet, would be to use a refined 2D \approximate~function that introduces more constraints to the polygonal approximation of the reach set.

\section{Conclusions}
\label{sec:conclusions}
%
We have revisited the fundamental set-based recurrence relation that arises in the study of reachability problems with affine dynamics and nondeterministic inputs.
For this we combined high dimensional linear algebra with low dimensional set computations and a state-of-the-art reachability algorithm.
We have shown that this approach is advantageous against the ``curse of dimensionality'': Reformulating the recurrence as a sequence of independent low-dimensional problems, where the set-based computations can be performed efficiently, we can effectively scale to high order systems.
The overapproximation is conservative due the decomposition, and we have characterized the influence of initial states, inputs, dynamics, and time step with an analytical upper bound.

We have evaluated our method on a set of real-world models from control engineering, involving many coupled variables. Numerical results show a speedup of up to two orders of magnitude with respect to state-of-the-art approaches that are non-decomposed. Apart from one exception, the overapproximation is within 22\% of the non-decomposed solution. 
In the dense-time case, our approach can handle systems with substantially more variables than the state-of-the-art tool \spaceex, by almost two orders of magnitude.
Note that in this paper we have only tested box directions to represent two-dimensional sets, since the accuracy seemed sufficient. The investigation of a method producing more accurate low-dimensional projections, arbitrarily close to the exact projection, could deliver even more precise results.

The approach presented in this paper can benefit from parallelization: the computations for each block are completely independent (see the loop in line~\ref{line:reach_computation_start} of Algorithm~\ref{algorithm:reach}). Using a separate thread for each block, this will give a speedup of~$n/2$.
\spaceex can also be parallelized, for bounding boxes by a factor of~$2n$. Comparing a parallelized Algorithm~\ref{algorithm:reach}) with parallelized \spaceex, we could theoretically see the speed-up in Table~\ref{table:reachability_benchmark} reduced from $50\times$ to $12\times$ (ISS benchmark).
However, in practice, the speed-up from parallelizing the LGG algorithm used in \spaceex turns out much more modest \cite{ray2015xspeed}.
We have only discussed partitions of two-dimensional blocks, and sequentially for each pair of rows. However, there is no theoretical restriction in considering blocks of dimensions one or three for the explicit computations, which should give further gains in speed for the former and gains in precision for the latter.
Furthermore, allowing overlapping blocks leads to relative completeness for software~\cite{HoenickeMP17}.

Similarity transformations, such as Schur or Jordan transformations, could be applied to the system's dynamics just after the discretization.
This will eventually have an impact on the accumulated error, on the performance (since the number of non-zero blocks would change), or both. Characterizing the advantage of using similarity transformations for a given dynamics matrix, initial states, and inputs is left for future study.

\begin{acks}
	M.F.\ acknowledges stimulating discussions with Alexandre Rocca and Cesare Molinari.
	
	This work was partially supported by the
	\grantsponsor{1}{European Commission}{https://cps-vo.org/group/UnCoVerCPS}
	under grant no.\
	\grantnum{1}{643921}
	(UnCoVerCPS), by the
	\grantsponsor{2}{Metro Grenoble}{}
	through the project NANO2017,
	by the
	\grantsponsor{3}{Air Force Office of Scientific Research}{http://www.wpafb.af.mil/afrl/afosr/}
	under award no.\
	\grantnum{3}{FA2386-17-1-4065},
	and by the ARC project
	\grantnum{4}{DP140104219}
	(\grantsponsor{4}{Robust AI Planning for Hybrid Systems}{https://cs.anu.edu.au/research/research-projects/robust-ai-planning-hybrid-systems}).
	Any options, finding, and conclusions or
	recommendations expressed in this material are those of the authors
	and do not necessarily reflect the views of the United States Air
	Force.
\end{acks}

\newpage
\bibliographystyle{ACM-Reference-Format}
\bibliography{bibliography.bib}

\ifextendedversion{%
\clearpage
\appendix
%
\section{Proofs}
\subsection{Proposition \ref{prop:HDistDecompMap}}
\begin{proof}[Proof of Proposition \ref{prop:HDistDecompMap}]
	The support function of $\set{\bar X}'$ on $d \in \Reals^n$ is, applying the properties in Lemma~\ref{lemma:support_function_and_vector_properties},
\begin{align*}
	\rho_{\set{\bar X}'}(d) &= \rho_{\Phi \set{\hat X}} (d) = \rho_{\set{\hat X}_1\times \cdots \times \set{\hat X}_b} (\Phi^\transp d) \\
	&= \sum_j \rho_{ \set{\hat X}_j}\left(\pi_j (\Phi^T d)\right) 
=  \sum_j \rho_{ \set{\hat X}_j}\left(\sum_k \Phi^\transp_{kj} d_k\right).
\end{align*}
On the other hand,
\begin{align*}
\rho_{\Xhat'}(d) &= \rho_{\Xhat'_1\times \cdots \times \Xhat'_b} (d) 
= \sum_i \rho_{\Xhat'_i}\left(d_i\right) =  \sum_{i, j} \rho_{\Xhat_j}\left(\Phi^T_{ij} d_i\right).
\end{align*}
The result is obtained plugging these expressions into~\eqref{eq:dpH_suppFun}.
\end{proof}
\subsection{Proposition \ref{prop:HDistDecompMap_simplified3}}
For the proof of Proposition \ref{prop:HDistDecompMap_simplified3} we need the following intermediate result.

In the following formula we can reduce the bound on the approximation error by freely selecting one specific row-block for each column-block of $\Phi$. In the $j$-th column-block, we denote this selection by $q_j$.
\begin{lemma}\label{prop:HDistDecompMap_simplified}
The approximation error is bounded by
\begin{equation*}\label{eq:HDistDecompMap_simplified}
d_H^p(\set{\bar X}', \set{\hat{X}'}) \leq \max_{\Vert d \Vert_p \leq 1} \sum_j \sum_{i \neq q_j} \rho_{ \Xhat_j}(\Phi_{ij}^T d_i) + \rho_{ \Xhat_j}\left(- \Phi_{ij}^T d_i \right).
\end{equation*}
\end{lemma}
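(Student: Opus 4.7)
The plan is to start from the exact identity of Proposition~\ref{prop:HDistDecompMap}, which expresses the Hausdorff distance as a maximum over unit-ball $d = d_1 \times \cdots \times d_b$ of the quantity
$$\sum_j \Bigl(\sum_i \rho_{\hat X_j}(\Phi_{ij}^T d_i) - \rho_{\hat X_j}\bigl(\textstyle\sum_k \Phi_{kj}^T d_k\bigr)\Bigr),$$
and to upper-bound each $j$-summand separately by an expression that does not involve the index $q_j$. Summing over $j$ and taking the maximum then yields the claimed bound.

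The central manipulation, for a fixed column-block $j$, is to split the inner sum by peeling off the $i = q_j$ term and then to use the sublinearity of the support function in both directions. The direction $\rho_{\hat X_j}(a+b) \leq \rho_{\hat X_j}(a) + \rho_{\hat X_j}(b)$ is the usual subadditivity, while the complementary inequality
$$\rho_{\hat X_j}(a+b) \geq \rho_{\hat X_j}(a) - \rho_{\hat X_j}(-b),$$
obtained by applying subadditivity to $a = (a+b) + (-b)$, provides a \emph{lower} bound on $\rho_{\hat X_j}(\sum_k \Phi_{kj}^T d_k)$ in terms of $\rho_{\hat X_j}(\Phi_{q_j j}^T d_{q_j})$ and $\rho_{\hat X_j}(-\sum_{i \neq q_j} \Phi_{ij}^T d_i)$. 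Substituting this lower bound into the negative term causes the $q_j$ contributions to cancel exactly, leaving
$$\sum_{i \neq q_j} \rho_{\hat X_j}(\Phi_{ij}^T d_i) + \rho_{\hat X_j}\Bigl(-\sum_{i \neq q_j} \Phi_{ij}^T d_i\Bigr).$$

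A final application of (ordinary) subadditivity to the last term distributes the outer sum over the $i \neq q_j$ summands, producing the desired expression $\sum_{i \neq q_j} [\rho_{\hat X_j}(\Phi_{ij}^T d_i) + \rho_{\hat X_j}(-\Phi_{ij}^T d_i)]$. Summing over $j$ and re-applying $\max_{\|d\|_p \leq 1}$ to both sides closes the argument. Note that since the bound holds for \emph{any} choice of the indices $\{q_j\}$, one is free to pick each $q_j$ to shrink the right-hand side; the subsequent Proposition~\ref{prop:HDistDecompMap_simplified3} will exploit this by picking $q_j := \arg\max_i \|\Phi_{ij}\|_p$.

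The only real obstacle is recognizing that the identity of Proposition~\ref{prop:HDistDecompMap} contains a \emph{difference} of support functions, so a one-sided use of subadditivity is not enough. The ``reverse'' inequality $\rho(a+b) \geq \rho(a) - \rho(-b)$, which is what makes the $q_j$ terms telescope away, is the key observation; everything else is routine bookkeeping of indices and a standard subadditive distribution.
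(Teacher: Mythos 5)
Your proposal is correct and matches the paper's own argument: both hinge on the reverse subadditivity inequality $\rho_{\X}(u+v) \geq \rho_{\X}(u) - \rho_{\X}(-v)$ applied to peel off the $i=q_j$ term, which then cancels against the corresponding positive term, followed by ordinary subadditivity to distribute the remaining $\rho_{\hat\X_j}\bigl(-\sum_{i\neq q_j}\Phi_{ij}^T d_i\bigr)$ over the summands. The only cosmetic difference is that the paper states the combined lower bound in one step rather than separating the reverse inequality from the final subadditive distribution.
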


\begin{proof}
	We use the property of support functions that $$\rho_\set{X}(u+v) \geq \rho_\set{X}(u) - \rho_\set{X}(-v).$$
	With this we can bound, picking any $q \in 1,\ldots,b$,
	\begin{align*}
	\rho_{ \Xhat_j}\left(\sum_k \Phi_{kj}^T d_k \right) \geq 
	\rho_{ \Xhat_j}\left(\Phi_{qj}^T d_q \right) - \sum_{k \neq q}   \rho_{ \Xhat_j}\left(- \Phi_{kj}^T d_k \right).
	\end{align*}
	We let~$q$ be a function of $j$ and substitute the above in~\eqref{eq:error_crossprod} with $k:=i$:
	\begin{align*}
	d_H^p(\set{\bar X}', \Xhat') &\leq  \max_{\Vert d \Vert_p \leq 1} \sum_{j}  \sum_{i} \rho_{ \set{\hat X}_j}(\Phi_{ij}^T d_i) -
	\rho_{ \set{\hat  X}_j}\left(\Phi_{q_jj}^T d_{q_j} \right)\\
	&\quad + \sum_{i \neq q_j}   \rho_{ \set{\hat  X}_j}\left(- \Phi_{ij}^T d_i \right)\\
	&=  \max_{\Vert d \Vert_p \leq 1} \sum_{j}  \sum_{i \neq q_j} \rho_{ \set{\hat  X}_j}(\Phi_{ij}^T d_i) + \rho_{ \set{\hat  X}_j}\left(- \Phi_{ij}^T d_i \right)\\
	&\quad - \rho_{ \set{\hat  X}_j}\left(\Phi_{q_jj}^T d_{q_j} \right) + \rho_{ \set{\hat  X}_j}\left(\Phi_{q_jj}^T d_{q_j} \right)\\
	&= \max_{\Vert d \Vert_p \leq 1} \sum_{j}  \sum_{i \neq q_j} \rho_{ \set{\hat  X}_j}(\Phi_{ij}^T d_i) + \rho_{ \set{\hat  X}_j}\left(- \Phi_{ij}^T d_i \right).
	\end{align*}
\end{proof}
The bound is actually tight. 
Intuitively speaking, the judicious selection of $q_j$ allows us to eliminate the block $\Phi_{ij}$ that contributes most to the error bound. 
Then we can bound the approximation error by
\begin{equation}\label{eq:HDistDecompMap_simplified2}
\hausdorff^p\bigl(\set{\bar X}',\set{\hat X}'\bigr) \leq 
\max_{\begin{tabular}{c}$\scriptstyle d\,=\,d_1 \times \ldots \times d_b$ \\[-2pt] $\scriptstyle \norm{d}_p\,\leq\, 1$\end{tabular}}
\sum_{j} \sum_{i\neq q_j} \norm{\Phi_{ij}^Td_i}_{\tfrac{p}{p-1}} \Delta_j.
\end{equation}

\begin{proof}[Proof of Proposition \ref{prop:HDistDecompMap_simplified3}]
	First, we apply to~\eqref{eq:HDistDecompMap_simplified2} that $$\norm{\Phi_{ij}^Td_i}_{\tfrac{p}{p-1}} \leq \norm{\Phi_{ij}^T}_{\tfrac{p}{p-1}} \norm{d_i}_{\tfrac{p}{p-1}} = \norm{\Phi_{ij}}_{p} \norm{d_i}_{\tfrac{p}{p-1}}.$$ This gives us
	\begin{align*}\label{eq:normstuff}
	\hausdorff^p\bigl(\set{\bar X}',\set{\hat X}'\bigr) &\leq 
	\max_{\begin{tabular}{c}$\scriptstyle d\,=\,d_1 \times \ldots \times d_b$ \\[-2pt] $\scriptstyle \norm{d}_p\,\leq\, 1$\end{tabular}}
	\sum_{j} \sum_{i\neq q_j} \alpha_j \norm{d_i}_{\tfrac{p}{p-1}} \Delta_j \\
	&= \max_{\begin{tabular}{c}$\scriptstyle d\,=\,d_1 \times \ldots \times d_b$ \\[-2pt] $\scriptstyle \norm{d}_p\,\leq\, 1$\end{tabular}}
	\sum_{j} \alpha_j  \Delta_j  \sum_{i\neq q_j}  \norm{d_i}_{\tfrac{p}{p-1}}
	\end{align*}
	With $\sum_{i\neq q_j} \norm{d_i}_{\tfrac{p}{p-1}} \leq (b-1)\norm{d}_{\tfrac{p}{p-1}}$ we get
	\begin{equation*}\label{eq:HDistDecompMap_simplified3_pp}
	\hausdorff^p\bigl(\set{\bar X}',\set{\hat X}'\bigr) \leq 
	\max_{\begin{tabular}{c}$\scriptstyle d\,=\,d_1 \times \ldots \times d_b$ \\[-2pt] $\scriptstyle \norm{d}_p\,\leq\, 1$\end{tabular}}
	(b-1) \norm{d}_{\tfrac{p}{p-1}} \sum_{j} \alpha_j  \Delta_j  .
	\end{equation*}
	For $p \leq 2$, it is known that $\norm{d}_{\tfrac{p}{p-1}} \leq  \norm{d}_p,$ which leads  to~\eqref{eq:HDistDecompMap_simplified3_ppetit}.
	The result holds for any $p \geq 1$ since $\norm{x}_1 \geq \norm{x}_p$.
\end{proof}

\subsection{Proposition \ref{prop:aff_rec_error}}
\begin{proof}[Proof of Proposition\;\ref{prop:aff_rec_error}]
By Eq.~\eqref{eq:constant_nonrecurrent_decomposed}, and since $\decompose$ distributes over Minkowski sum (Eq.~\eqref{eq:decomp_minkowski}), we get
\[
\Xhat(k) = \decompose(\Phi^k\X(0)) \oplus \What(k).
\]
Then,
\begin{align*}
\hausdorff^p\bigl(\set{\Xhat}(k),\set{X}(k)\bigr) 
= 
\hausdorff^p\bigl(\decompose(\Phi^k\X(0))\oplus \What(k),\Phi^k\X(0)\oplus \W(k)\bigr)
\\
\leq \hausdorff^p\bigl(\decompose(\Phi^k\X(0)), \Phi^k\X(0)\bigr) + \hausdorff^p\bigl(\What(k),\W(k)\bigr).
\end{align*}
Applying Prop.\;\ref{prop:total_aff_err_bound} with $\alpha_j \leq K_\Phi \alpha_\Phi^k$, we get the bound
\begin{align*}
\hausdorff^p\bigl(\decompose(\Phi^k\X(0))&, \Phi^k\X(0)\bigr) \\
&\leq (b-1) \sum_j \alpha_j \Delta_j^x + \Vert \Phi^k \Vert_p \hausdorff^p\bigl(\Xhat(0)), \X(0)\bigr)
\\
&\leq K_\Phi \alpha_\Phi^k (b-1)    \Delta^x_\mathrm{sum} + K_\Phi \alpha_\Phi^k \varepsilon^x.
\end{align*}

Similarly, we get
\begin{align*}
\hausdorff^p\bigl(\set{\What}(k),\set{W}(k)\bigr) &\leq 
\varepsilon^v + K_\Phi ((b-1)\Delta^v_\mathrm{sum} + \varepsilon^v) \sum_{s=1}^{k-1} \alpha_\Phi^s \\
&= \varepsilon^v  + K_\Phi  \bigl( (b-1)    \Delta^v_\mathrm{sum} + \varepsilon^v \bigr) \alpha_\Phi \frac{1-\alpha_\Phi^{k-1}}{1-\alpha_\Phi}.
\end{align*}
The conclusion follows from combining both bounds.
\end{proof}

\section{Support Functions}
\label{sec:additional}

We recall the following elementary properties of support functions and support vectors.

\begin{lemma}\label{lemma:support_function_and_vector_properties}
	For all compact convex sets $\X$, $\Y$ in $\Reals^n$, for all \nbyn real matrices $M$, all scalars $\lambda$, and all vectors $\ell \in \Reals^n$, we have:
	\begin{itemize}
		\item $\rho_{\lambda\X} (\ell) = \rho_{\X} (\lambda \ell)$, $\sigma_{\lambda\X} (\ell) = \lambda \sigma_{\X} (\lambda \ell)$
		 \goodbreak
		 \vspace{0.2cm} 
		\item $\rho_{M\X} (\ell) = \rho_{\X} (M^\transp \ell)$, $\sigma_{M\X} (\ell) = M\sigma_{\X} (M^\transp \ell)$
		 \goodbreak
		 \vspace{0.2cm}
		\item  $\rho_{\X \oplus \Y} (\ell) = \rho_{\X} (\ell) + \rho_{\Y} (\ell)$,  $\sigma_{\X \oplus \Y} (\ell) = \sigma_{\X} (\ell) \oplus \sigma_{\Y} (\ell)$
		 \goodbreak
		 \vspace{0.2cm}
		\item  $\rho_{\X \times \Y} (\ell) = \ell^\transp \sigma_{\X \times \Y}(\ell)$,\\ $\sigma_{\X \times \Y} (\ell) = (\sigma_{\X}(\ell_1), \sigma_{\Y}(\ell_2))$, $\ell = (\ell_1, \ell_2)$
		 \goodbreak
		 \vspace{0.2cm}
		\item  $\rho_{\CH(\X\cup\Y)} (\ell) = \max (\rho_{\X} (\ell), \rho_{\Y} (\ell))$, \\
		$\sigma_{\CH(\X\cup\Y)} (\ell) = \argmax\limits_{x, y} (\ell^\transp x, \ell^\transp y), x \in \sigma_{\X}(\ell), y \in \sigma_{\Y}(\ell)$
	\end{itemize}
\end{lemma}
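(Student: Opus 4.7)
The plan is to prove each identity directly from the definitions $\rho_\X(\ell) = \max_{x \in \X} \ell^\transp x$ and $\sigma_\X(\ell) = \{x \in \X : \ell^\transp x = \rho_\X(\ell)\}$, using only an elementary change of variables or reindexing of the maximization domain; no deep convex analysis is needed beyond bilinearity of the inner product.

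First I would handle the linear map identity $\rho_{M\X}(\ell) = \rho_\X(M^\transp \ell)$ by rewriting the maximization domain: every $y \in M\X$ has the form $y = Mx$ for some $x \in \X$, so $\ell^\transp y = \ell^\transp (Mx) = (M^\transp \ell)^\transp x$, and taking the max over $x \in \X$ yields the claim. The support vector identity follows by tracking the maximizer: if $x^\ast$ attains $\rho_\X(M^\transp \ell)$, then $Mx^\ast$ lies in $M\X$ and attains $\rho_{M\X}(\ell)$. The scaling identity is the special case $M = \lambda \mathbb{I}_n$, and the support vector version specializes accordingly; the only case to check is $\lambda < 0$, but the reparametrization $y = \lambda x$ still bijects $\X$ onto $\lambda\X$ and the bilinearity argument is unchanged, so the identity is uniform in the sign of $\lambda$.

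Next I would treat the Minkowski sum by writing every $z \in \X \oplus \Y$ as $z = x + y$ with $x \in \X$, $y \in \Y$ and observing $\ell^\transp z = \ell^\transp x + \ell^\transp y$, where the two summands can be maximized independently over $x$ and $y$; this yields $\rho_{\X \oplus \Y}(\ell) = \rho_\X(\ell) + \rho_\Y(\ell)$, and the support vector statement falls out by taking any maximizer $x^\ast \in \sigma_\X(\ell)$ and $y^\ast \in \sigma_\Y(\ell)$ and forming $x^\ast + y^\ast$. The Cartesian product identity is analogous but with $\ell$ split into block components $\ell = (\ell_1, \ell_2)$ so that $\ell^\transp (x, y) = \ell_1^\transp x + \ell_2^\transp y$ decouples block-wise; the identity $\rho_{\X\times\Y}(\ell) = \ell^\transp \sigma_{\X\times\Y}(\ell)$ is then immediate from the defining property of the support vector.

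Finally, for the convex hull, I would note that $\X \cup \Y \subseteq \CH(\X \cup \Y)$ gives the lower bound $\rho_{\CH(\X\cup\Y)}(\ell) \geq \max(\rho_\X(\ell), \rho_\Y(\ell))$; for the upper bound, any $z \in \CH(\X \cup \Y)$ is a convex combination $z = \mu x + (1-\mu)y$ with $x \in \X$, $y \in \Y$, $\mu \in [0,1]$, so linearity of $\ell^\transp \cdot$ and convexity of $\max$ give $\ell^\transp z \leq \mu \rho_\X(\ell) + (1-\mu)\rho_\Y(\ell) \leq \max(\rho_\X(\ell), \rho_\Y(\ell))$. The support vector identity just names whichever side attains the max. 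There is no real obstacle here: these are textbook identities and each proof is essentially a one-line change of variables, so the only drafting care needed is to treat $\sigma_\X(\ell)$ as a set (per the paper's convention that ``the'' support vector denotes any element of it) so that equalities like $\sigma_{\X\oplus\Y}(\ell) = \sigma_\X(\ell) \oplus \sigma_\Y(\ell)$ are read set-theoretically.
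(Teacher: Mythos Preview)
The paper does not actually prove this lemma: it introduces it with ``We recall the following elementary properties of support functions and support vectors'' and states the identities without any argument. Your proposal is a correct, standard verification directly from the definitions, which is exactly how one would fill in this omitted (textbook) proof; there is nothing to compare against and no gap to flag.
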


\section{Sparsity patterns}
\label{sec:appendix_sparsity}

\def\pic#1{\centering \includegraphics[keepaspectratio,height=1.3cm]{fig/sparsity/#1}}
\begin{table}[ht]
	\caption{%
		Sparsity characteristics of the SLICOT benchmarks.
		The sparsity of $A$ ($\Phi$), noted as ``$\spA$'' (``$\spPhi$''), is the relative number of non-zero \nbyn[2] blocks.
	}
	\label{table:slicot_benchmark}
	\begin{tabular}{l r c c m{16mm} @{\hspace*{2mm}} m{16mm} @{} c @{}}
		\toprule
		\multicolumn{1}{c}{Model} & \multicolumn{1}{c}{$n$} & $\spA$ & $\spPhi$ & \multicolumn{1}{c}{\parbox[c]{10mm}{Sparsity \hfill plot ($A$)}} & \multicolumn{1}{c}{\parbox[c]{10mm}{Sparsity \hfill plot ($\Phi$)}} & \\
		\midrule
		Motor & 8 & 50.0\% &50.0\% & \pic{motor_A_sparsity.png} & \pic{motor_phi_sparsity.png} & \\
		Building & 48 & 47.9\% & 0.0\% & \pic{building_A_sparsity.png} & \pic{building_phi_sparsity.png} & \\
		PDE & 84 & 84.8\% & 0.0\%  & \pic{pde_A_sparsity.png} & \pic{pde_phi_sparsity.png} & \\
		Heat & 200 & 97.0\% & 0.0\% & \pic{heat_A_sparsity.png} & \pic{heat_phi_sparsity.png} & \\
		ISS & 270 & 98.1\% & 97.7\% & \pic{iss_A_sparsity.png} & \pic{iss_phi_sparsity.png} & \\
		Beam & 384 & 49.7\% & 0.0\% & \pic{beam_A_sparsity.png} & \pic{beam_phi_sparsity.png} & \\
		MNA1 & 578 & 98.3\% & 3.4\% & \pic{mna1_A_sparsity.png} & \pic{mna1_phi_sparsity.png} & \\
		FOM & 1006 & 99.8\% & 99.8\% & \pic{fom_A_sparsity.png} & \pic{fom_phi_sparsity.png} & \\
		MNA5 & 10913 & 99.9\% & 98.8\% & \centering \crash & \multicolumn{1}{c}{\crash} & \\
		\bottomrule
	\end{tabular}
\begin{tablenotes}
	\footnotesize
	For the MNA5 model the plotting engine crashed (\crash).
\end{tablenotes}
\end{table}

\onecolumn
\clearpage
\section{Reach tube plots}
\label{sec:appendix_reach_plots}

\def\d{50mm}
\def\pic#1{\hspace*{0mm} \hfill \includegraphics[keepaspectratio,height=48mm]{#1}}
\begin{longtable}{c m{7cm} m{7cm} m{0mm} @{}}
	\caption{Reach tube plots in dense time for both \tool and \spaceex.
	The x-axis always shows the time and the y-axis shows the variable that we reported in Table~\ref{table:reachability_benchmark}, using the same step size $\delta = \num{1e-3}$.} \\
	\toprule
	\endfirsthead
	\caption{Reach tube plots in dense time (continued).} \\
	\toprule
	\endhead
	\bottomrule
	\endfoot
	\bottomrule
	\endlastfoot
	\multicolumn{1}{c}{Model} & \multicolumn{1}{c}{\tool} & \multicolumn{1}{c}{SpaceEx} \\
	\midrule
	Motor & \pic{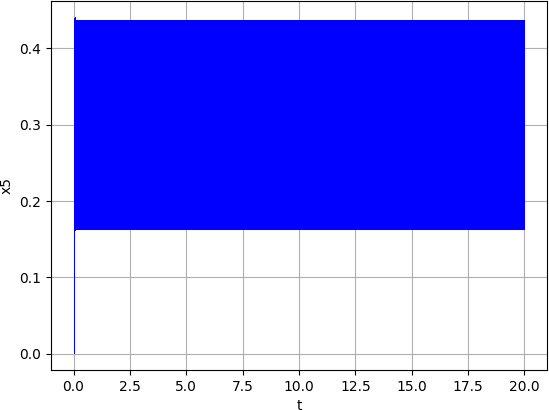} & \pic{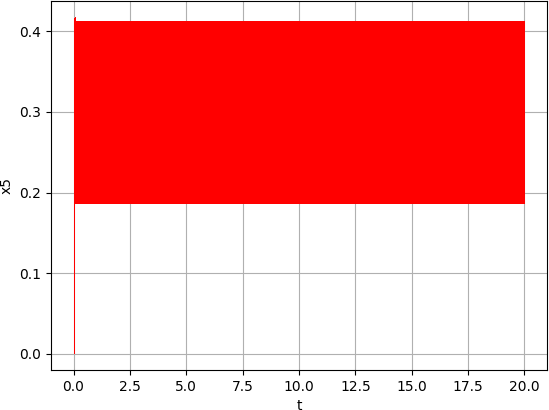} &\\[\d]
	Building & \pic{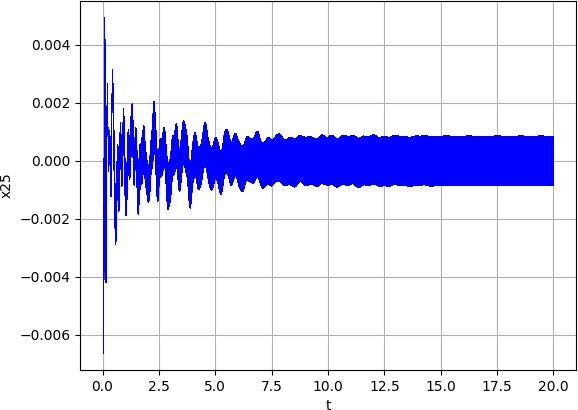} & \pic{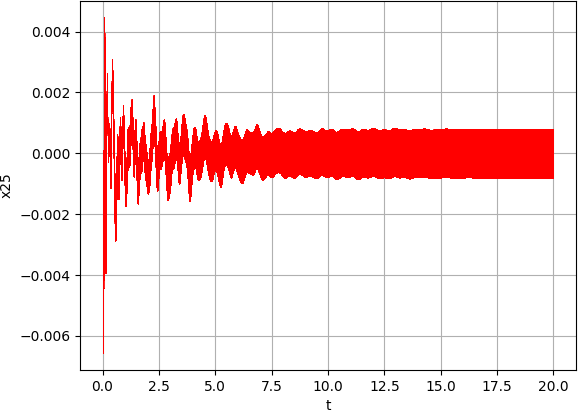} &\\[\d]
	PDE & \pic{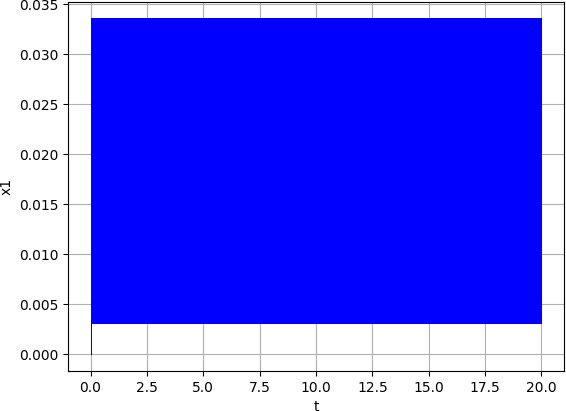} & \pic{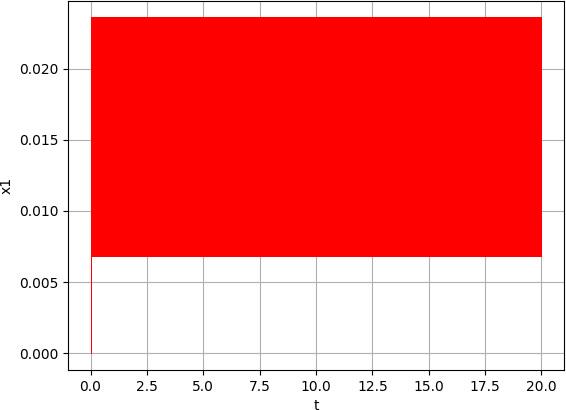} &\\[\d]
	Heat & \pic{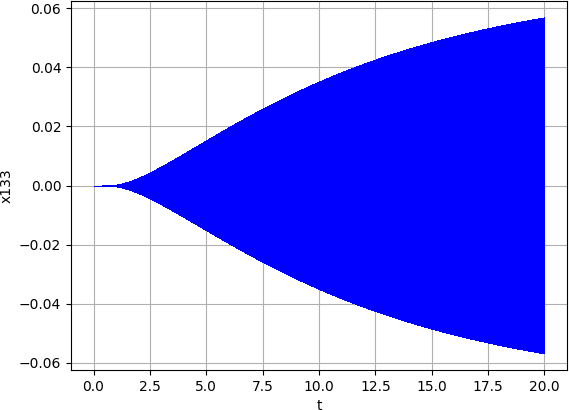} & \pic{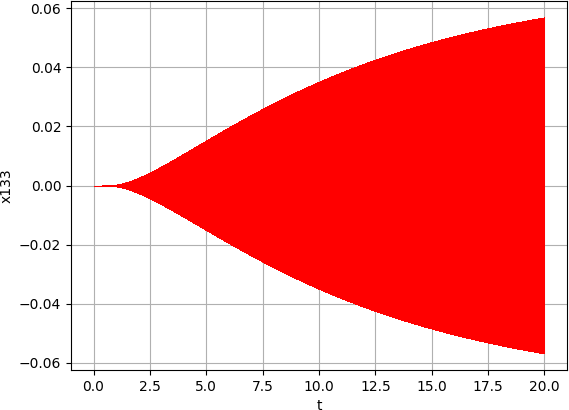} &\\[\d]
	ISS & \pic{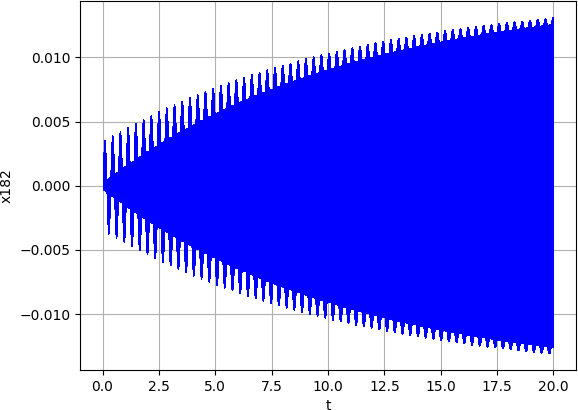} & \pic{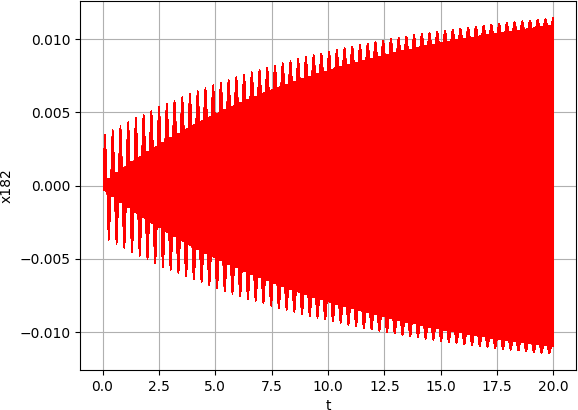} &\\[\d]
	Beam & \pic{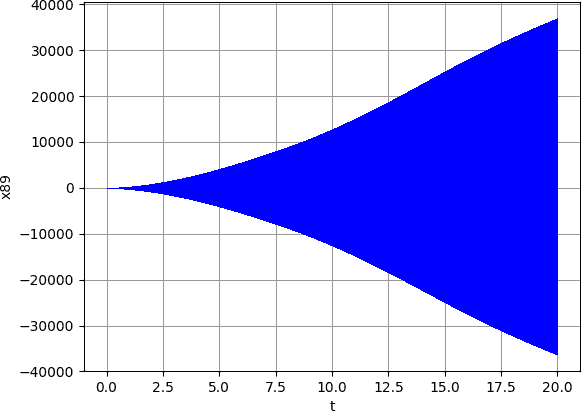} & \pic{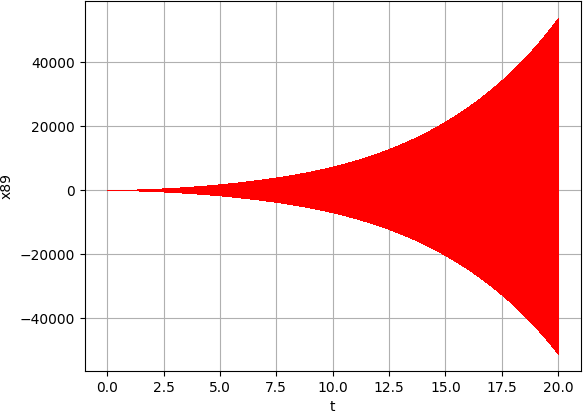} &\\[\d]
	MNA1 & \pic{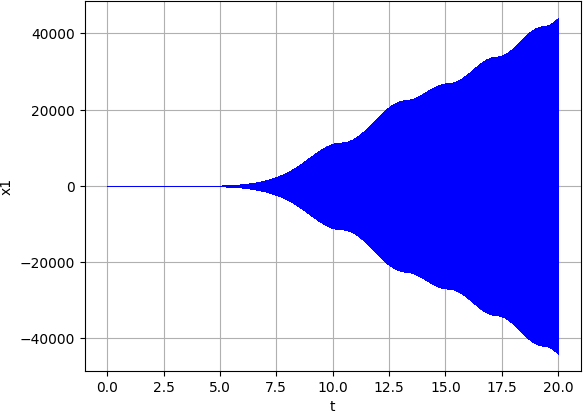} & \multicolumn{1}{c}{n/a} &\\[\d]
	FOM & \pic{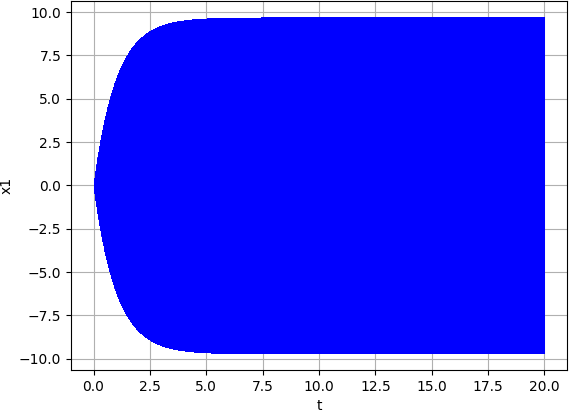} & \multicolumn{1}{c}{n/a} &\\[\d]
	MNA5 & \pic{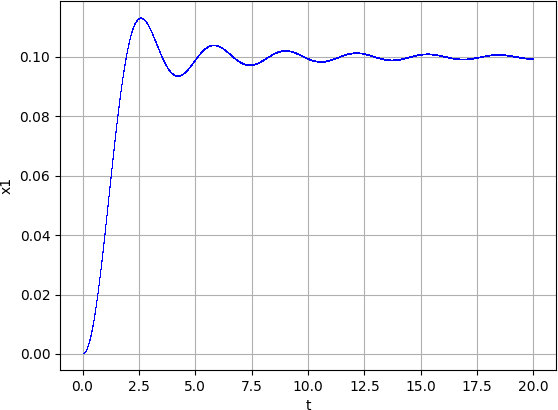} & \multicolumn{1}{c}{n/a} \\
\end{longtable}

}\fi

\end{document}